\newtheorem{theorem}{Theorem}
\newtheorem{lemma}[theorem]{Lemma}
\newtheorem{corollary}[theorem]{Corollary}
\newtheorem{definition}[theorem]{Definition}
\DeclareRobustCommand*{\bfseries}{%
  \not@math@alphabet\bfseries\mathbf
  \fontseries\bfdefault\selectfont
  \boldmath
}
\renewcommand{\paragraph}[1]{\medskip\noindent{\bf #1}}
\newcommand{\notewarning}{%
\ifnum\totvalue{notecount}>0%
 \vspace{1ex}
\begin{center}
 \begin{tikzpicture}[baseline=(A.south)]
    \node (A) [] at (0,0){};
    \node [rounded corners=1pt,rectangle, draw=red, fill=red!20,text=black](B) at (0.1ex,0ex){
        \Large \raggedright {\bf Warning:} There are still some notes left!
    };
 \end{tikzpicture}
\end{center}
 \vspace{1ex}
\fi
}
\def\myaddcontentsline#1#2#3{%
  \addtocontents{#1}{\protect\contentsline{#2}{#3}{Section \thesubsection\ at p. \thepage}{}}}
\renewcommand{\@todonotes@addElementToListOfTodos}{%
    \if@todonotes@colorinlistoftodos%
        \myaddcontentsline{tdo}{todo}{{%
            \colorbox{\@todonotes@currentbackgroundcolor}%
                {\textcolor{\@todonotes@currentbackgroundcolor}{o}}%
            \ \@todonotes@caption}}%
    \else%
        \myaddcontentsline{tdo}{todo}{{\@todonotes@caption}}%
   \fi}%
\newcommand*\mylistoftodos{%
  \begingroup
       \setbox\@tempboxa\hbox{Section 9.9 at p. 99}%
       \renewcommand*\@tocrmarg{\the\wd\@tempboxa}%
       \renewcommand*\@pnumwidth{\the\wd\@tempboxa}%
       \listoftodos%
  \endgroup
}
\definecolor{lightgreen}{rgb}{0.86, 0.93, 0.78}
\definecolor{bordergreen}{rgb}{0.55, 0.76, 0.74}
\definecolor{lightblue}{rgb}{0.70, 0.90, 0.99}
\definecolor{borderblue}{rgb}{0.01, 0.66, 0.96}
\definecolor{lightamber}{rgb}{1, 0.93, 0.70}
\definecolor{borderamber}{rgb}{1, 0.76, 0.03}
\definecolor{lightcolor4}{rgb}{ 0.93, 0.70, 1}
\definecolor{bordercolor4}{rgb}{0.76, 0.03, 1}
\definecolor{lightcolor5}{rgb}{0.78,0.86,0.93}
\definecolor{bordercolor5}{rgb}{0.74,0.55,0.76}
\newcommand{\algoHead}[1]{\vspace{0.2em} \underline{\textbf{#1}} \vspace{0.3em}}
\algnewcommand{\ExtendedState}[1]{\State
\parbox[t]{\dimexpr\linewidth-\ALG@thistlm}{\hangindent=\algorithmicindent\strut\hangafter=3#1\strut}}
\algnewcommand\algorithmicinput{\textbf{Input:}}
\algnewcommand\Input{\item[\algorithmicinput]}
\algrenewcommand{\algorithmiccomment}[1]{{\color{gray}// #1}}
\algnewcommand{\IIf}[1]{\State\algorithmicif\ #1\ \algorithmicthen}
\algnewcommand{\EndIIf}{\unskip\ \algorithmicend\ \algorithmicif}
 \newtcolorbox{titlebox}[5]{enhanced,center,colframe=black,colback=white,boxrule={#3},arc={#2},auto outer arc,%
 breakable,pad at break*=5pt,vfill before first,before={
 },after={\par\smallskip},top=12pt,left=4pt,%
 enlarge top by=2pt,
 fontupper=\small,
 title={\rule[-.3\baselineskip]{0pt}{\baselineskip}\normalsize\sffamily\bfseries #1}, varwidth boxed title*=-30pt, 
 attach boxed title to top left={yshift=-10pt,xshift=10pt}, coltitle=black,
 boxed title style={colback=white,boxrule={#5},arc={#4},auto outer arc}
 }
 \newenvironment{dianabox}[1]
 {\begin{titlebox}{\normalfont #1}{0.5pt}{0.5pt}{1pt}{0.75pt}}
 {\end{titlebox}}
\let\orgdescriptionlabel\descriptionlabel
\renewcommand*{\descriptionlabel}[1]{%
  \let\orglabel\label
  \let\label\@gobble
  \phantomsection
  \edef\@currentlabel{#1}%
  \let\label\orglabel
  \orgdescriptionlabel{#1}%
}
\let\emptyset\varnothing
\DeclarePairedDelimiter\abs{\big\lvert}{\big\rvert}
\newcommand{\party}{P\xspace}
\newcommand{\PS}{\mathcal{P}} 
\newcommand{\acs}{\textsc{ACS}}
\newcommand{\val}[0]{\textsc{val}}
\newcommand{\inputconfigs}[0]{\mathcal{I}}
\newcommand{\inputt}[0]{\textsc{in}}
\newcommand{\outputt}[0]{\textsc{out}}
\newcommand{\ba}[0]{\textsc{BA}}
\newcommand{\validity}[0]{\textsc{val}}
\newcommand{\parties}[0]{\textsc{parties}}
\newcommand{\execution}{\varepsilon}
\newcommand{\neighbors}[0]{\textsc{neighbors}}
\newcommand{\similar}[0]{\textsc{similar}}
\title{Validity in Network-Agnostic Byzantine Agreement}
\author[1]{Andrei Constantinescu}
\author[1]{Marc Dufay}
\author[2]{Diana Ghinea}
\author[1]{Roger Wattenhofer}
\affil[1]{ETH Z{\"u}rich\\ \{aconstantine, mdufay, wattenhofer\}@ethz.ch}
\affil[2]{Lucerne University of Applied Sciences and Arts\\ diana.ghinea@hslu.ch}
\date{\vspace{-5ex}}
\begin{document}

\maketitle

\begin{abstract}
Byzantine Agreement (BA) considers a setting of $n$ parties, out of which up to $t$ can exhibit byzantine (malicious) behavior. Honest parties must decide on a common value (agreement), which must belong to a set determined by the honest inputs (validity). Depending on the use case, this set can grow or shrink, leading to various possible desiderata collectively known as validity conditions.
Varying the validity property requirement can affect the regime under which BA is solvable.

Our work investigates how the selected validity property impacts BA solvability in the network-agnostic model, where the network can either be synchronous with up to $t_s$ byzantine parties or asynchronous with up to $t_a \leq t_s$ byzantine parties. We give necessary and sufficient conditions for a validity property to render BA solvable, both for the case with cryptographic setup and for the one without. This traces the precise boundary of solvability in the network-agnostic model for every validity property. Our proof of sufficiency provides a universal protocol, that achieves BA for a given validity property whenever the provided conditions are satisfied. 

We note that, for any non-trivial validity property, the condition $2 \cdot t_s + t_a < n$ is necessary for BA to be solvable, even with cryptographic setup. Specializing this claim to $t_a = 0$ gives that $t < n / 2$ is required whenever one expects a purely synchronous protocol to also work in an asynchronous network when there are no corruptions. This is especially surprising given that, for some validity properties, $t < n$ is a sufficient condition without the last stipulation.
\end{abstract}



\newpage
\setcounter{tocdepth}{2} 
\tableofcontents

\thispagestyle{empty} 

\newpage
\pagenumbering{arabic}

\section{Introduction}

Achieving agreement among the parties involved in a distributed system is crucial for maintaining consistent views. This becomes particularly challenging due to the potential for parties' failures, which can range from benign crashes to malicious (byzantine) behavior. Byzantine Agreement ($\ba$) is an extensively studied problem in distributed computing that tackles this challenge. It seeks to establish a common value amongst a set of $n$ parties even when up to $t$ parties exhibit byzantine behavior.
A crucial aspect of $\ba$ lies in its validity condition, which requires that the value agreed upon reflects the honest parties' proposals rather than being a default or arbitrary value. The traditional definition of $\ba$ considers the so-called \emph{strong unanimity} (also known as \emph{strong validity}): if all honest parties propose the same value $v$, the agreed-upon output must be $v$.
This is a powerful guarantee in applications concerning binary decisions, as it coincides with \emph{honest-input validity}, which requires agreement on an honest input. However, strong unanimity fails to provide meaningful outputs for larger input spaces. For instance, consider a set of parties running a $\ba$ protocol to agree on a room's temperature.
Minor measurement errors are inherent, and hence strong unanimity allows agreement on a temperature proposed by corrupted parties. Similar challenges arise in scenarios such as deciding on a location using GPS coordinates \cite{BOUZID20103154} or when the map is modeled as a graph \cite{DISC:NoRy19,constantinescu_ghinea_convex_consensus}.

While achieving honest-input validity in scenarios where the input space is large (size $\omega(n)$) is impossible \cite{Nei94}, the literature offers a plethora of weaker alternatives that are stronger and more meaningful than strong unanimity. For instance, one may avoid corrupted outputs by enhancing strong unanimity with an additional condition called \emph{intrusion tolerance} \cite{EUROCRYPT:Mose24,PODC24:GhLiWa,PODC25:GhLiWa}: 
the honest parties either agree on an honest input or on a special symbol $\bot$.
In the aforementioned scenarios of deciding on a measurement or a meeting point, a highly suitable alternative is the so-called \emph{convex validity}: the output agreed upon must be in the honest parties' inputs convex hull, i.e., within the range of honest inputs if the input space is a subset of $\mathbb{R}$ \cite{PODC:VaiGar13,DIST:MHVG15, DISC:NoRy19,constantinescu_ghinea_convex_consensus,PODC24:GhLiWa,PODC25:GhLiWa}. Stronger variants for real values require the outputs to be close to the honest inputs' median \cite{OPODIS:StolWat15, OPODIS:CGHWW23}, or to the $k$-th lowest honest input \cite{IEEE:MelWat18}. However, the honest-range approach is not a universal solution: this would carry no meaning in a voting problem if we represented candidates with integers. Instead, approaches based on social choice theory (\emph{Pareto validity}) lead to more appropriate validity definitions \cite{WINE18:ByzantineVoting}.

The multitude of validity definitions leads to a natural question: what are the necessary and sufficient conditions for achieving $\ba$ with a given validity property, i.e., \emph{to solve a given validity property}? This question can be concerned with multiple aspects, such as resilience thresholds $t$, round complexity, or message complexity. In addition, the conditions may depend on whether a cryptographic setup and randomization are available. The communication model assumed also plays an important role: one extreme is the synchronous model, where all parties have synchronized clocks, and all messages get delivered within a predefined amount of time $\Delta$. This enables elegant protocols that operate in rounds, but that may fail in a real-life network where sporadic issues are possible.
The other extreme is the asynchronous model, which only assumes that messages get delivered eventually. The asynchronous model comes with highly robust protocols, but also with important drawbacks:
\begin{enumerate*}[label=(\alph*)]
    \item lower resilience threshold: e.g., $t < n/3$ as opposed to $t < n/2$ (assuming digital signatures) for strong unanimity, or $t < n/4$ as opposed to $t < n/3$ for convex validity in $\mathbb{R}^2$;
    \item randomization is a requirement
    \cite{FLP}.
\end{enumerate*}
The middle ground between the two models has also been considered. The partially synchronous model \cite{JACM:DLS88} bridges the gap between the two extremes by assuming that the network is initially asynchronous and eventually becomes synchronous. In the synchronous and partially synchronous models, the solvability of validity conditions using deterministic protocols is completely understood \cite{PODC24:CGGKPV, PODC23:CGGKV}.

In this paper, we will be concerned with a different paradigm for bridging the gap between synchrony and asynchrony, namely the \emph{network-agnostic} model,
which has attracted increased attention in recent years \cite{TCC:BKL19, Crypto:BLL20, TCC:DHL21, CCS:AtsRen21, PODC:ApChCh22, PODC:GhLiWa22, SPAA:GhLiWa23, constantinescu_ghinea_convex_consensus}. In this model, parties are initially unaware of whether the network is synchronous or not: if it is synchronous, at most $t_s$ of the parties involved may be corrupted, and otherwise only $t_a \leq t_s$ of the parties may be corrupted. Network-agnostic protocols are designed to provide guarantees in both cases. Hence, we ask the following question:

\vspace{0.2cm}
{
    \centering
    \emph{What are the necessary and sufficient conditions for achieving\\ network-agnostic $\ba$ with a given validity property?}\par
}
\vspace{0.2cm}



\subsection{Our Contribution}
We provide a complete characterization for achieving network-agnostic $\ba$ with a given validity property, establishing tight necessary and sufficient conditions.

We first show that, even if cryptographic setup is available, the condition $n > 2 \cdot t_s + t_a$ is a requirement for any non-trivial validity condition to be solvable (i.e., a condition for which simply outputting a default value does not suffice). When no cryptographic setup is available, we show the stronger requirement of $n > 3 \cdot t_s$.
Our proof for the latter, in fact, works in the synchronous model and, therefore, strengthens the characterization provided by \cite{PODC24:CGGKPV} for the synchronous model. In particular, \cite{PODC24:CGGKPV} only focuses on deterministic protocols, while our proofs rely on different techniques that also apply to randomized protocols.

Afterwards, regardless of whether cryptographic setup is available or not, we add one more necessary condition, which is an adaptation of
the \emph{similarity condition} of \cite{PODC23:CGGKV} and the \emph{containment condition} of \cite{PODC24:CGGKPV} to the network-agnostic model. 
Roughly, this requires that \emph{similar} honest inputs' configurations have a valid output in common.
The term \emph{similar} captures that some of the proposed inputs may come from corrupted parties, and that, in asynchronous networks, some honest inputs may be missing due to high network delays.

Finally, we show that, together, the aforementioned conditions are also sufficient by providing a universal protocol, that achieves network-agnostic $\ba$ for a given validity property whenever these conditions are satisfied. This is a more general variant of the protocol of \cite{constantinescu_ghinea_convex_consensus}. In particular, the requirement for solvability is precisely $n > 2 \cdot t_s + t_a$ together with the similarity condition assuming cryptographic setup, and $n > 3 \cdot t_s$ together with the similarity condition assuming no cryptographic setup. Our protocol is randomized, which is a requirement when the network may be asynchronous and $t_a > 0$ \cite{FLP}.




\subsection{Related Work}
\paragraph{General validity conditions.} 
The foundational investigation into general validity properties was initiated by Civit et al. \cite{PODC23:CGGKV} for the partially synchronous model. Subsequently, Civit et al. \cite{PODC24:CGGKPV} embarked on a follow-up study, extending their analysis to the synchronous model. Both works provide a complete characterization, identifying the necessary and sufficient conditions for solving a validity property deterministically. 
We also note that the contributions of \cite{PODC23:CGGKV, PODC24:CGGKPV} extend beyond this characterization, an important side of these works lying in the exploration of lower bounds on message complexity. This generalizes the well-established Dolev-Reischuk bound on message complexity for $\ba$ with strong unanimity \cite{JACM:DoRe85} to encompass the broader landscape of non-trivial validity properties. Considerations of message complexity are outside our scope.
We also note that \emph{probabilistic} validity properties are outside the scope of the analysis of Civit et al, and also outside our scope. A notable example is \emph{qualitative validity}, introduced by Goren et al \cite{GoMoSp23}.


\paragraph{Network-Agnostic $\ba$ and particular validity conditions.} 
Designing protocols that achieve security guarantees in both synchronous and asynchronous networks has been the subject of an extensive line of work. The network-agnostic paradigm was introduced by Blum, Katz and Loss \cite{TCC:BKL19}. The work of \cite{TCC:BKL19} shows that, if a public key infrastructure is provided, $\ba$ with strong unanimity can be achieved if and only if  $n > 2 \cdot t_s + t_a$. Further works on network-agnostic $\ba$ with strong unanimity have focused on improving the efficiency guarantees  \cite{TCC:DHL21, EUROCRYPT:Mose24}.

Due to its broad applicability, convex validity within the network-agnostic communication paradigm has attracted increased attention. Ghinea, Liu-Zhang and Wattenhofer \cite{PODC:GhLiWa22,SPAA:GhLiWa23} have investigated the feasibility of achieving convex validity for a weaker variant of $\ba$, known as Approximate Agreement \cite{JACM:DLPSW86, OPODIS:AAD04}. In particular, \cite{PODC:GhLiWa22} shows that Approximate Agreement on real numbers is solvable under the same necessary and sufficient condition $n > 2 \cdot t_s + t_a$ assuming a public key infrastructure. Building on the previous, \cite{SPAA:GhLiWa23} gives sufficient conditions for the multidimensional variant of the problem that match the known requirements in the pure synchronous and asynchronous models \cite{PODC:VaiGar13,DIST:MHVG15}. Returning to the non-approximate version, Constantinescu et al. \cite{constantinescu_ghinea_convex_consensus}
have provided the tight conditions for  network-agnostic $\ba$ with convex validity for abstract convex spaces.
In this case, the conditions include $n > 2 \cdot t_s + t_a$ or, if no cryptographic setup is available, $n > 3 \cdot t_s$, along with a few additional conditions that depend on the Helly number of the convexity space.

Other takes on network-agnostic $\ba$ have been considered, such as \emph{scaling} the validity guarantees with the network conditions: for real-valued inputs, \cite{OPODIS:CGHWW23} proposes a protocol for $\ba$ with median validity guaranteeing that the output is closer to the honest inputs' median when the network is synchronous than when it is asynchronous. This protocol simultaneously matches the optimal closeness guarantees for purely synchronous \cite{OPODIS:StolWat15,IEEE:MelWat18} and purely asynchronous \cite{OPODIS:CGHWW23} networks. Such validity properties that scale with the network conditions are outside our scope.


\paragraph{Comparison to previous works.}
As outlined above, the conditions $n > 2 \cdot t_s + t_a$ and, if no cryptographic setup is available, $n > 3 \cdot t_s$, have been proven to be necessary for strong unanimity properties, i.e., stronger than strong unanimity \cite{TCC:BKL19}. 
Our work shows that this is a requirement for weaker validity properties as well, i.e., for any non-trivial property. 
We find this result surprising especially for \emph{weak validity}: if \emph{all parties are honest} and hold input $v$, then the output agreed upon must be $v$. Assuming a public key infrastructure, 
this property is solvable in the synchronous model for $t_s < n$, as a straightforward application of the Dolev-Strong broadcast protocol \cite{DolStr83}. On the other hand, our result implies that if we expect a $\ba$ protocol with weak validity to remain secure in the asynchronous model even for no corruptions (i.e., $t_a = 0$), then the synchronous resilience threshold steps down from $t_s < n$ to $t_s < n  / 2$. 

In contrast to the work of \cite{constantinescu_ghinea_convex_consensus} regarding network-agnostic $\ba$ with convex validity, our results move the difficulty of proving such feasibility results as a whole to only verifying whether a validity condition satisfies our similarity condition. That is, one can show that convex validity satisfies this similarity condition if and only if the necessary Helly number-based conditions of \cite{constantinescu_ghinea_convex_consensus} hold. Our impossibility arguments diverge: we investigate these under \emph{any} validity property, while the work of \cite{constantinescu_ghinea_convex_consensus} considers a fixed validity property but also shows impossibility under weaker agreement requirements. On the other hand, our protocol matching our lower bounds is a more general variant of the protocol of \cite{constantinescu_ghinea_convex_consensus}.

Our paper provides a characterization similar to those of \cite{PODC23:CGGKV, PODC24:CGGKPV} for the synchronous and partially synchronous settings, respectively. The key difference is that these two models allow for deterministic protocols, while the network-agnostic model inherently requires randomization for achieving $\ba$ when $t_a > 0$ \cite{FLP}.
Consequently, the focus shifts towards randomized protocols, requiring our proofs to employ different techniques.
While the arguments behind the \emph{containment} condition of \cite{PODC24:CGGKPV} can be easily adapted for randomized protocols, this is not immediate for their proof that $n > 3 \cdot t_s$ is necessary when no cryptographic setup is available. Our proof for this lower bound, in fact, assumes the synchronous setting and, therefore, strengthens the characterization of \cite{PODC24:CGGKPV}. Summing up, their necessary conditions now hold even for randomized protocols, and, as shown in their paper, they can be matched by deterministic protocols. 

\section{Preliminaries}\label{sec:preliminaries}

We consider a setting with $n$ parties $\PS = \{\party_1, \party_2, \ldots, \party_n\}$ running a protocol in a fully-connected network, where links model authenticated channels.
We will be working in the \emph{network-agnostic} model: the network may be synchronous, or asynchronous, and the parties are not aware a priori of the type of network. If the network is synchronous, the parties hold perfectly synchronized clocks and each message is delivered within a publicly known amount of time $\Delta$. Otherwise, if the network is asynchronous, messages can get delayed for an arbitrary amount of time, and clocks may not be synchronized.

\paragraph{Adversary.} We assume a central adversary that may corrupt up to $t_s$ of the $n$ parties if the network is synchronous, and up to $t_a \leq t_s$ parties if the network is asynchronous. Corrupted parties permanently become byzantine, and hence may deviate arbitrarily (maliciously) from the protocol. 
The adversary additionally controls the message delivery schedule, subject to the conditions of the network type.
Our impossibility results assume a static adversary (i.e., chooses which parties to corrupt at the beginning of the protocol's execution). Our protocol, on the other hand, provides security even against an adaptive adversary (i.e., chooses which parties to corrupt at any point in the protocol's execution).

\paragraph{Cryptographic setup.} We will consider both settings with and without cryptographic setup. Protocols assuming a cryptographic setup will make use of a public key infrastructure (PKI) and a secure signature scheme, and only hold against a computationally bounded adversary. For simplicity of presentation, we assume that the signatures are perfectly unforgeable. 

\paragraph{Byzantine Agreement and Validity.} 
A Byzantine Agreement ($\ba$) protocol assumes that each (honest) party holds a value $v_{\inputt} \in V_{\inputt}$ as input, and enables the parties to agree on a common output $v_{\outputt} \in V_{\outputt}$ satisfying a given validity condition. We assume that $V_{\inputt}$ is at most countably infinite.\footnote{We will need to take an intersection of events that happen almost surely over the set of possible input configurations. If $V_{\inputt}$ is uncountably infinite, the resulting event does not necessarily happen almost surely, breaking some of our arguments.}
We discuss this in \cref{section:uncountable-stuff}. 

In the following, we present each property that a $\ba$ protocol needs to satisfy.
The first property is \emph{termination}, which may be deterministic (for synchronous protocols) or probabilistic:

\vspace{0.15cm}
\begin{itemize}[nosep]
\item \textbf{(Termination)} Every honest party decides on an output $v_{\outputt}$.
\item \textbf{(Probabilistic Termination)} As time goes to infinity, the probability that an honest party has not yet decided on an output value $v_{\outputt}$ tends to $0$.
\end{itemize}
\vspace{0.15cm}

The second property that $\ba$ requires is \emph{agreement}, as defined below.

\vspace{0.15cm}
\begin{itemize}[nosep]
\item \textbf{(Agreement)} If two honest parties output $v_{\outputt}$ and $v_{\outputt}'$, then $v_{\outputt} = v_{\outputt}'$.
\end{itemize}
\vspace{0.15cm}
Before describing the validity property, we need to define \emph{input configurations}.
Regardless of the nature of the adversary, these are defined by looking at the honest parties' inputs \emph{after} the adversary has decided which parties to corrupt. Hence, in impossibility results, where we consider a static adversary, input configurations are defined before the protocol's execution.
An \emph{input configuration} is a set $I \subseteq \PS \times V_{\inputt}$ consisting of pairs of honest parties with their inputs: if $(v, \party) \in I$, then $\party$ is an honest party with input $v$. Naturally, no party occurs twice in $I$ (i.e., honest parties cannot simultaneously have two inputs). We use the notation $\parties(I)$ to refer to the set of (honest) parties in the input configuration $I$. Note that, if $\party \notin \parties(I)$, then $\party$ is corrupted in the input configuration $I$.
Let $\inputconfigs = \{\text{input configurations }I \subseteq \PS \times V_{\inputt}\text{ such that }\abs{I} \geq n - t_s \}$ denote the set of all possible input configurations.
We also note the inclusion relation for input configurations: for $I, J \in \inputconfigs$, $J \subseteq I$ if and only if $\parties(J) \subseteq \parties(I)$ and the parties in $\parties(J)$ have the same input value in both $I$ and $J$. 
We say that an input configuration $I$ is maximal if $\parties(I) = \PS$. Moreover, as $V_{\inputt}$ is at most countably infinite, the size of $\inputconfigs$ is at most countably infinite.
A validity property is then defined by a mapping $\validity : \inputconfigs \to 2^{V_\outputt}$ from honest parties' inputs
to \emph{valid} outputs:

\vspace{0.15cm}
\begin{itemize}[nosep]
\item \textbf{(Validity)} If $I \in \inputconfigs$ is the input configuration defined by the honest parties and their inputs, then no honest party outputs $v_{\outputt} \notin \validity(I)$.
\end{itemize}
\vspace{0.15cm}

This validity definition matches the one used in \cite{civit2023byzantine, PODC24:CGGKPV}. We say that a validity property $\val$ is \emph{trivial} if
$
    \bigcap_{I \in \inputconfigs} \validity(I) \neq \emptyset.
$
Note that, if this condition holds, we can achieve validity, agreement, and termination with no communication: parties output a value in $ \bigcap_{I \in \inputconfigs} \validity(I)$.

A validity property $\validity$ is \emph{solvable} if there is a $\ba$ protocol \emph{solving} $\validity$, as defined below.

\begin{definition}
    A protocol $\Pi$ is a $(t_s, t_a)$-secure $\ba$ protocol solving a validity property $\val$ if it achieves probabilistic termination, agreement, and validity for the given property $\val$ even when up to $t_s$ parties are corrupted if it runs in a synchronous network, and even when up to $t_a$ parties are corrupted if it runs in an asynchronous network.
\end{definition}

\begin{definition}
    A protocol $\Pi$ is a $t_s$-secure $\ba$ protocol solving a validity property $\val$ if, when running in a synchronous network, it achieves (probabilistic) termination, agreement, and validity for the given property $\val$ even when up to $t_s$ parties are corrupted.
\end{definition}

One might be tempted to believe that a $(t_s, 0)$-secure protocol is simply a $t_s$-secure protocol, but the difference is rather subtle. Namely, a $(t_s, 0)$-secure \emph{network-agnostic} $\ba$ protocol also provides guarantees in an asynchronous network if all parties are honest. On the other hand, a $t_s$-secure (synchronous) $\ba$ protocol is not required to provide any guarantees if the synchrony assumptions fail, even if there are no corruptions. We add that this subtle difference would not apply to a purely asynchronous variant of the definition, which is equivalent to $(t_a, t_a)$-secure \emph{network-agnostic} $\ba$ protocol.

\paragraph{Randomness.} Our work covers $\ba$ protocols which can run in the asynchronous setting. As a result of FLP \cite{FLP}, the protocol considered must be randomized. We consider the randomness as a black box where, at each instant, a party can ask for one or multiple random bits, each set to $0$ or $1$ uniformly and independently. So the randomness from a party's point of view can be seen as an infinite bitstring being progressively read. There may also be shared randomness, which gives the same result to each party. Therefore, when running a protocol, we can consider its behavior over a probabilistic space $(\Omega, \mathcal{F}, \mu)$ where $\Omega = \left( \{0,1\}^\mathbb{N}\right)^k$ for some $k > 0$ is the set of all possible random bits parties will read. $\mathcal{F}$ is the $\sigma$-algebra generated by taking all possible prefixes from each bitstring, and $\mu$ is the resulting probability measure from having each bit following independently a Bernoulli random variable $\mathcal{B}(0.5)$. From this point on, when mentioning probabilities, like almost surely properties, we refer to the probabilistic space given above.

\paragraph{Executions.} For a protocol $\Pi$, we define an \emph{execution} $\execution$ to be a particular feasible run of $\Pi$.
In particular, $\execution$ contains the input configuration $I \in \inputconfigs$ from which the protocol started, the behavior of the byzantine parties, and the scheduler's behavior (including whether the network was synchronous or asynchronous).
Because an execution depends on the randomness, it is actually a random variable $\execution(\omega)$. We then say that an execution \emph{decides} if all honest parties in the execution eventually decide an output. We note that, given a protocol satisfying probabilistic termination, including the scheduler and strategy of the adversary, an execution for this protocol decides almost surely.
We will also be using the term \emph{canonical} to refer to executions occurring in a synchronous network
where all messages are delivered exactly $\Delta$ units of time after being sent and where all corrupted parties crash right at the beginning of the protocol (i.e., they do not send any messages). We add that, for any given input configuration, there is a unique canonical execution (which, recall, is a random variable).

For a given protocol $\Pi$ and randomness $\omega \in \Omega$, we say that two executions $\execution_1(\omega)$ and $\execution_2(\omega)$ \emph{cannot be distinguished} by a party $\party$ that is honest in both executions if it has the same initial state in $\execution_1(\omega)$ and $\execution_2(\omega)$ (i.e., input and randomness), and receives in both $\execution_1(\omega)$ and $\execution_2(\omega)$ the same messages at the same times. As a consequence, if $\party$ cannot distinguish between $\execution_1(\omega)$ and $\execution_2(\omega),$ then $\party$ is in exactly the same state at any time $T$ in $\execution_1(\omega)$ and $\execution_2(\omega)$. Hence, if $\party$ decides a value $v_{\outputt}$ at time $T$ in one of the two executions, then it also decides $v_{\outputt}$ at time $T$ in the other. We also say that an execution $\execution(\omega)$ is deciding if all honest parties eventually decide when running with randomness $\omega$. If a protocol satisfies probabilistic termination, then an execution is almost surely deciding.


\paragraph{Validity in indistinguishable executions.} Goren et al.~\cite{GoMoSp23} offer a framework to formalize indistinguishability for randomized protocols. We instead decided to take a different approach by looking at deterministic indistinguishability after fixing the randomness $\omega \in \Omega$. This can allow for easier results as we do not have to consider the whole probabilistic space at a time. For example, some of our proofs require us to define multiple executions depending on $\omega$ and cannot be achieved with the framework above. The drawback is that all the assumptions made must hold \emph{almost surely} for the proofs to be correct
. Indistinguishability is a powerful tool, especially when considering scenarios where byzantine parties follow the protocol correctly, but with inputs of their own choice.  This leads us to the following lemma.
The proof is included in Appendix \ref{appendix:preliminaries}.


\begin{restatable}{lemma}{ValidityInclusion}\label{lemma:validity-inclusion}
    Let $\validity$ be a validity property and $\Pi$ be a $(t_s, t_a)$-resilient $\ba$ protocol solving $\validity$. 
    Consider two input configurations $I, J$ such that $J \subseteq I$. 
    Then, the value agreed upon in any execution of $\Pi$ which decides 
    and where the input configuration is $I$ must be in $\validity(J)$.
\end{restatable}
Intuitively, Lemma \ref{lemma:validity-inclusion} ensures that honest parties cannot distinguish between a scenario where all parties in $\parties(I)$ are honest, and one where the parties in $\parties(I) \setminus \parties(J)$ are, in fact, byzantine.
%
With the lemma in mind, for any validity property $\validity$, we can define a new validity property $\validity'$ such that $\validity'(I) = \bigcap_{J \subseteq I} \validity(J)$ for all $I \in \inputconfigs$. Property $\validity'$ is simultaneously a stronger version of $\validity$, in that for all $I \in \inputconfigs$ we have $\validity'(I) \subseteq \validity(I)$, but it also has the property of being monotonically \emph{decreasing}, in that for $J \subseteq I$ we have $\validity'(I) \subseteq \validity'(J)$. Armed as such, the previous lemma has the following immediate corollaries:%
\begin{corollary} A protocol solves $\validity$ if and only if it solves $\validity'$. Hence, $\validity$ is solvable if and only if $\validity'$ is solvable.
\end{corollary}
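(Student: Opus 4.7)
The plan is to prove both directions of the ``if and only if'' separately, with the forward direction being essentially a restatement of \cref{lemma:validity-inclusion} and the reverse direction being immediate from the pointwise inclusion $\validity'(I) \subseteq \validity(I)$.

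For the easy direction, suppose $\Pi$ solves $\validity'$. Termination and agreement are validity-independent, so they transfer to $\validity$ unchanged. For validity, fix any execution with input configuration $I$ that decides some value $v_{\outputt}$. By assumption $v_{\outputt} \in \validity'(I) = \bigcap_{J \subseteq I} \validity(J)$, and in particular taking $J = I$ yields $v_{\outputt} \in \validity(I)$, so $\Pi$ satisfies validity for $\validity$.

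For the hard direction, suppose $\Pi$ solves $\validity$. Again termination and agreement are inherited for free, so the only work is to verify validity for $\validity'$. Fix any execution that decides, let $I$ be its input configuration, and let $v_{\outputt}$ be the value decided. For every $J \subseteq I$, \cref{lemma:validity-inclusion} applied to $I$ and $J$ gives $v_{\outputt} \in \validity(J)$. Intersecting over all such $J$ yields $v_{\outputt} \in \bigcap_{J \subseteq I} \validity(J) = \validity'(I)$, as required. The second sentence of the corollary (solvability equivalence) then follows by taking ``there exists a protocol'' on both sides of the equivalence just proved.

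The main obstacle, if any, is being careful that \cref{lemma:validity-inclusion} is applicable here in the exact form needed: the lemma is stated for $(t_s, t_a)$-resilient protocols, and we are invoking it for a protocol $\Pi$ that solves $\validity$ by assumption, so the resilience hypothesis is automatic. One should also note that $J \in \inputconfigs$ whenever $J \subseteq I$, since $|J| \geq |\parties(J)|$ and the defining inequality $|J| \geq n - t_s$ is part of membership in $\inputconfigs$; thus the intersection defining $\validity'(I)$ is well-formed and ranges exactly over the configurations to which the lemma applies. No randomization-specific subtlety arises because the argument is applied to a single deciding execution at a time.
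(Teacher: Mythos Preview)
Your proof is correct and is exactly the argument the paper has in mind when it calls this an ``immediate corollary'' of \cref{lemma:validity-inclusion}. One small note: your parenthetical justification that $J \in \inputconfigs$ whenever $J \subseteq I$ is garbled (the inequality $|J| \geq |\parties(J)|$ is trivial and does not yield $|J| \geq n - t_s$), but this is moot since in the paper the relation $J \subseteq I$ is only defined between elements of $\inputconfigs$, so the intersection in $\validity'$ already ranges over configurations by convention.
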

\begin{corollary} \label{coro:party-max}
    A solvable validity property $\validity$ is trivial if and only if it permits deciding the same value for all maximal input configurations, i.e., $\bigcap_{I \in \inputconfigs, \parties(I) = \PS} \validity(I) \ne \emptyset$.
\end{corollary}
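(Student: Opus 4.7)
My plan is to prove the equivalence by passing to the monotonically decreasing refinement $\validity'$ introduced above. By the preceding corollary, $\validity$ is solvable iff $\validity'$ is, and the identity
\[
\bigcap_{I \in \inputconfigs} \validity'(I) \;=\; \bigcap_{I \in \inputconfigs} \bigcap_{J \subseteq I} \validity(J) \;=\; \bigcap_{J \in \inputconfigs} \validity(J)
\]
shows that $\validity$ is trivial iff $\validity'$ is trivial. So it suffices to work with $\validity'$, for which monotonicity makes the argument essentially set-theoretic.

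The forward direction is immediate: if $\bigcap_{I \in \inputconfigs} \validity(I)$ is nonempty, any element of it automatically lies in the restricted intersection $\bigcap_{I \in \inputconfigs,\, \parties(I) = \PS} \validity(I)$, which is therefore nonempty.

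For the backward direction, I exploit the monotonicity of $\validity'$. Given any $J \in \inputconfigs$, since $|\PS \setminus \parties(J)| \leq t_s$, one can extend $J$ to a maximal $I \supseteq J$ by assigning arbitrary inputs to the missing parties. Monotonicity then yields $\validity'(I) \subseteq \validity'(J)$, so $\bigcap_{I' \text{ maximal}} \validity'(I') \subseteq \validity'(J)$. Taking the intersection over all $J \in \inputconfigs$ gives $\bigcap_{I \text{ maximal}} \validity'(I) \subseteq \bigcap_{J} \validity'(J)$ (the reverse inclusion being clear), so nonemptiness of the maximal intersection of $\validity'$ forces triviality of $\validity'$, equivalently of $\validity$.

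The main subtle point I anticipate is bridging the literal hypothesis ``$\bigcap_{I \text{ maximal}} \validity(I) \neq \emptyset$'', phrased in terms of $\validity$, with its $\validity'$ counterpart used in the argument above. This is exactly where the assumption of solvability is essential: by \cref{lemma:validity-inclusion}, any deciding execution of a solving protocol on a maximal $I$ in fact produces an output inside $\validity'(I) \subseteq \validity(I)$, so on maximal configurations $\validity'$ captures precisely the outputs that any valid protocol can actually realize. Thus under solvability the two formulations of the ``same value for all maximal inputs'' condition coincide, which is the main obstacle and the point where the statement of the corollary truly relies on solvability rather than being a purely set-theoretic tautology.
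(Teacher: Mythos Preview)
Your reduction to $\validity'$ and the monotonicity argument are the right setup, and the forward direction is fine. You also correctly isolate the crux: the hypothesis is $\bigcap_{I \text{ maximal}} \validity(I) \neq \emptyset$, but the argument needs $\bigcap_{I \text{ maximal}} \validity'(I) \neq \emptyset$. The gap is your final paragraph, where you claim solvability makes these coincide. \cref{lemma:validity-inclusion} only says that \emph{actual protocol outputs} on a maximal $I$ lie in $\validity'(I)$; it says nothing about arbitrary elements of $\validity(I)$, so there can be values in $\bigcap_{I \text{ maximal}} \validity(I)$ that no solving protocol ever produces and that fail to lie in some $\validity'(I)$. Concretely, take $n=4$, $t_s=1$, $t_a=0$, $V_\inputt=\{a,b\}$, $V_\outputt=\{0,1\}$; set $\validity(I)=\{0,1\}$ for every maximal $I$, and for $3$-party $I$ set $\validity(I)=\{0\}$ if $I$ consists of $\party_2,\party_3,\party_4$ all with input $a$, $\validity(I)=\{1\}$ if all with input $b$, and $\{0,1\}$ otherwise. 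One checks the similarity condition holds and $n>2t_s+t_a$, so this $\validity$ is solvable; it is non-trivial since $\{0\}\cap\{1\}=\emptyset$; yet $\bigcap_{I \text{ maximal}}\validity(I)=\{0,1\}\neq\emptyset$. So the formula in the corollary is actually false as written, and no bridging argument from solvability can repair it.

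What is true, and what the paper actually invokes in \cref{lemma:technical-end}, is the English phrase ``permits deciding the same value'': if a solving protocol outputs the \emph{same} $v$ on every maximal $I$, then \cref{lemma:validity-inclusion} gives $v\in\validity'(I)$ for each maximal $I$, hence $v\in\bigcap_{J\in\inputconfigs}\validity(J)$ and $\validity$ is trivial. Your $\validity'$ machinery proves exactly this version cleanly. The ``i.e.'' in the corollary conflates this with the weaker set-theoretic condition on $\validity$; the paper's own ``immediate corollary'' offers no proof either, and it is this conflation---not your approach---that is at fault.
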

We end by stating a technical lemma that will be of use in the proofs presented in the subsequent sections. The proof of \cref{lemma:technical-end} is included in Appendix \ref{appendix:preliminaries}.
\begin{restatable}{lemma}{TechnicalEnd}\label{lemma:technical-end}
    Let $\validity$ be a solvable validity property and $\Pi$ a protocol solving $\validity$.
    Let $I_1 \in \inputconfigs$ be a maximal input configuration. If, for every maximal input configuration $I_2 \in \inputconfigs$, the canonical executions of $\Pi$ for $I_1$ and $I_2$ decide the same value almost surely, then $\validity$ is trivial. 
\end{restatable}

\section{Lower Bound on $n$} \label{sec:lower-bound-pki}

In this section, we prove that, for any non-trivial validity property, the condition $n > 2 \cdot t_s + t_a$ is necessary for it to be solvable in the network-agnostic model even if cryptographic setup is available. Our proof will be organized as follows:
we start with a preliminary lemma in \cref{subsec:setup:preliminary}, which focuses on a simplified setting where $n := 2$. In this setting, at most one party can crash if the network is synchronous, and both parties are honest if the network is asynchronous. Our preliminary lemma will show a somewhat counter-intuitive result: roughly, the value decided upon in canonical executions is independent of the honest parties' inputs.
In \cref{subsec:setup:warmup}, we move from the simplified setting with two parties to a warm-up variant of our main proof. We will be working with $n$ parties, but we will only consider the particular case $t_a := 0$. By reducing this case to our preliminary lemma, we show that the condition $n > 2 \cdot t_s$ is necessary in this setting.
Finally, \cref{subsec:setup:dificile} focuses on the general case, showing that $n > 2 \cdot t_s + t_a$ is necessary by reducing to our preliminary lemma. The main proof will be a more general version of the warm-up proof.

We note that the proofs by Civit et al. \cite{PODC24:CGGKPV} rely on reducing any non-trivial validity property to weak validity, enabling lower bounds to focus solely on weak validity.
However, their reduction is invalid for randomized protocols and hence cannot be used in our setting.

\subsection{Preliminary Lemma} \label{subsec:setup:preliminary}

As previously mentioned, our preliminary lemma focuses on a simplified setting with $n := 2$ parties in the network-agnostic model. When the network is synchronous, we allow the adversary to corrupt up to one party ($t_s := 1$). We restrict the adversary's capabilities by only allowing the corrupted party to crash. When the network is asynchronous, both parties are honest ($t_a := 0$). 
Then, we assume a protocol achieving (probabilistic) termination and agreement in this setting. 
We show that, almost surely, once randomness is fixed, all canonical executions decide the same value. Note that there may be a set of  randomnesses such that some canonical executions do not even decide, but the probability of picking a randomness in this set is zero.

\begin{lemma}\label{lemma:agnostic-lemma}
    Assume $n := 2, t_s := 1$, $t_a :=0$, and that corrupted parties are only allowed to crash. Consider a protocol $A$ that achieves probabilistic termination and agreement in this setting. 
    Then, almost surely, there exists a value $v$ such that all canonical executions of $A$ decide $v$ when running.
\end{lemma}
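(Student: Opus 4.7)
The plan is to fix the randomness $R$ and denote by $v_A(x, R)$ the decision of $P_2$ in the canonical execution where $P_1$ is corrupted (and hence crashes) and $P_2$ has input $x$, by $v_C(y, R)$ the analogous decision of $P_1$ when $P_2$ is crashed and $P_1$ has input $y$, and by $v_B(y, x, R)$ the common decision in the canonical execution with both parties honest and inputs $(y, x)$. The goal is to show that almost surely over $R$ all three quantities coincide for every $x, y$. Since $V_\inputt$ is at most countably infinite, countable intersections over input configurations are free, so it suffices to fix arbitrary $x, y$ and prove almost-sure equality.

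For the first equality $v_A(x, R) = v_C(y, R)$, I would consider the asynchronous execution $E_0$ in which both parties are honest with inputs $(y, x)$ and the scheduler holds every message until after both parties have decided (possible by probabilistic termination). Before deciding, $P_1$'s view is indistinguishable from the canonical execution where $P_2$ crashed, so $P_1$ decides $v_C(y, R)$; symmetrically $P_2$ decides $v_A(x, R)$. Since both parties are honest in $E_0$, agreement forces $v_A(x, R) = v_C(y, R)$, a common value I denote $v(R)$.

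The harder part, and what I expect to be the main obstacle, is showing $v_B(y, x, R) = v(R)$. To tackle this I would, for each $k \geq 0$, define the asynchronous execution $B_k$ with both honest and inputs $(y, x)$, in which each party's round-$r$ messages are delivered at the canonical time for $r \leq k$ and delayed beyond both parties' decisions for $r > k$. A straightforward induction on $r$ shows that in $B_k$ the round-$r$ messages sent by each party for $r \leq k$ coincide with those of canonical (B), since such a message depends only on the party's input, randomness, and messages received through round $r - 1$. Hence both parties' pre-decision views are determined, and by agreement they decide a common value $w_k(R)$, with $w_0(R) = v(R)$ (as $B_0$ is essentially $E_0$) and $w_k(R) = v_B(y, x, R)$ for all $k$ exceeding the almost-surely finite canonical-(B) decision round. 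Supposing for contradiction $v(R) \neq v_B(y, x, R)$, there is a smallest $k^* \geq 1$ with $w_{k^*}(R) \neq w_{k^* - 1}(R)$.

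To close the argument I would introduce the asymmetric asynchronous execution $B_{k^*}^{\star}$ that behaves exactly like $B_{k^*}$ except that $P_2$'s round-$k^*$ message to $P_1$ is also delayed. The same induction shows all round-$r$ messages for $r \leq k^*$ still match canonical (B). Consequently $P_2$'s view in $B_{k^*}^{\star}$ equals its view in $B_{k^*}$ (it still receives $P_1$'s round-$r$ messages for every $r \leq k^*$), so $P_2$ decides $w_{k^*}(R)$; meanwhile $P_1$'s view matches its view in $B_{k^* - 1}$ (receiving $P_2$'s round-$r$ messages only for $r \leq k^* - 1$), so $P_1$ decides $w_{k^* - 1}(R)$. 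Agreement in $B_{k^*}^{\star}$ then forces $w_{k^*}(R) = w_{k^* - 1}(R)$, contradicting the choice of $k^*$. Intersecting over all pairs $(y, x)$ concludes the lemma. The delicate point is verifying that this single asymmetric schedule simultaneously realizes two different truncated canonical-(B) views for the two parties, which requires carefully tracking state evolution and is precisely where the hypothesis $t_a = 0$ — enforcing agreement in every asynchronous scheduling with both parties honest — becomes essential.
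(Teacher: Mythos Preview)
Your proof is correct and follows essentially the same approach as the paper: first link the two one-crash canonical executions through an asynchronous all-messages-delayed execution, then bridge to the both-honest canonical execution via a hybrid chain of partially delayed asynchronous executions, invoking agreement at each step. The only difference is the granularity of the chain---the paper releases one \emph{message} at a time (so the sender of that message directly cannot distinguish consecutive hybrids), whereas you release one \emph{round} at a time and therefore need the extra asymmetric intermediate $B_{k^*}^{\star}$; the paper's per-message chain is slightly cleaner, but the underlying idea is identical.
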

\begin{proof}
Denote the two parties by $\party_1$ and $\party_2$. Let $\omega \in \Omega$. For every input configuration, we will define a finite number of executions using randomness $\omega$, 
including the required canonical executions.
Under the assumption that all executions defined for $\omega$ decide, we show that all canonical executions using randomness $\omega$ decide the same value $v$. From here, our proof proceeds as follows: individually, each defined execution decides almost surely (by construction). The number of executions defined for each input configuration is finite, and the set of input configurations is countable (since $V_\inputt$ is countable). A countable intersection of events holding almost surely holds almost surely. Then, almost surely, all defined executions for $\omega$ decide, i.e., our additional assumption holds almost surely, completing the proof.

Hence, from now on, we fix $\omega \in \Omega$ and only consider executions running with this randomness $\omega$. We assume that all executions we consider decide, and we want to show that all canonical executions decide the same value $v$.


We first consider canonical executions where one of the parties crashes. Let $v_1, v_2 \in V_{\inputt}$ be two arbitrary values.
First, we consider the canonical execution $\varepsilon_1(\omega)$  of $A$ (in a synchronous network) where $\party_1$  is honest and has input $v_1$. $\party_2$ is corrupted and crashes at the beginning of the execution. We have assumed that $\varepsilon_1(\omega)$ is deciding, hence $\party_1$ outputs some value $v_1'$ within a finite amount of time $T_1(\omega)$ (note:~without receiving any messages from $P_2$). 
Second, consider the canonical execution  $\varepsilon_2(\omega)$ of $A$ (again, in a synchronous network) where  $\party_2$ is honest and has input $v_2$. $\party_1$ is corrupted and crashes at the beginning of the execution. We have assumed that $\varepsilon_2(\omega)$ is deciding, hence $\party_2$ outputs some value $v_2'$ within a finite amount of time $T_2(\omega)$. We prove that $v_1' = v_2'$ by defining a third execution $\varepsilon_{1, 2}(\omega)$, but, this time, in the asynchronous model. Here, both $\party_1$ and $\party_2$ are honest with inputs $v_1$ and $v_2$ respectively. The scheduler delays any message between them until time $\max(T_1(\omega), T_2(\omega))$. This way, $\party_1$ cannot distinguish execution $\varepsilon_{1, 2}(\omega)$ from execution $\varepsilon_1(\omega)$, and therefore it outputs $v_1'$ in $\varepsilon_{1, 2}(\omega)$ as well. Similarly, $\party_2$ outputs $v_2'$ in $\varepsilon_{1, 2}(\omega)$. Recall that $A$ achieves agreement, hence $v_1' = v_2'$. 
Note that the argument holds for arbitrary $v_1, v_2 \in V_{\inputt}$, so we get that \emph{all} canonical executions where one of the parties crashes decide the same value, which we denote by $v$.

We now consider canonical executions where none of the parties crashes and show that all such executions also decide $v$. Consider any $v_1, v_2 \in V_{\inputt}$ and the canonical execution $\varepsilon_M(\omega)$ of $A$ where both parties are honest and have inputs $v_1$ and $v_2$. We also define $\varepsilon_{1, 2}(\omega)$ as before. From the previous part, it follows that $\varepsilon_{1, 2}(\omega)$ decides $v$. We will now show that $\varepsilon_M(\omega)$ decides $v$ as well. Based on our assumption, we already know that $\varepsilon_M(\omega)$ decides, so it does so after a finite number of messages $\ell$. For the next part of the proof, we first give an intuitive outline. Roughly, we will build a chain of $\ell + 1$ scenarios between execution $\varepsilon_{1, 2}(\omega)$ and execution $\varepsilon_M(\omega)$: in each scenario $1 \leq m < \ell$, the network is asynchronous, and the scheduler ensures that the first $m$ messages are received synchronously, while all subsequent messages are delayed sufficiently long. Scenarios $0$ and $\ell$ correspond to executions $\varepsilon_{1,2}(\omega)$ and $\varepsilon_M(\omega)$ respectively, and any two consecutive scenarios will be indistinguishable to the party that sent the last message: this will imply that $\varepsilon_M(\omega)$ also decides $v$. In the following, we write this idea formally.

\begin{figure}[H]
\centering
\includegraphics[scale=0.4]{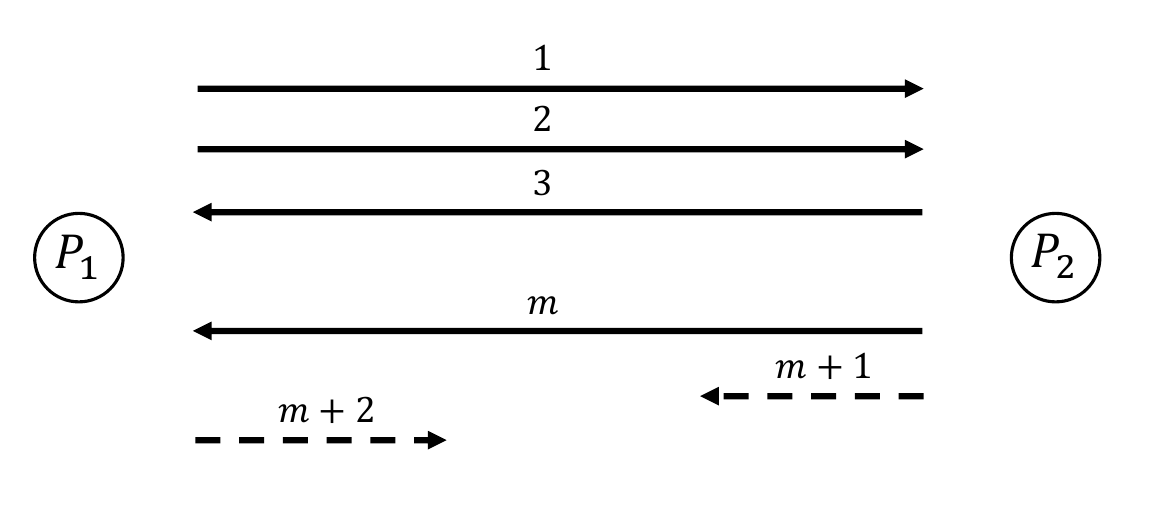}
\caption{Example of execution $\varepsilon_{M,m}(\omega)$: Any messages sent after the first $m$ messages get delayed until after a value has been decided.}
\end{figure}
For $0 \leq m \leq \ell$, we consider an execution $\varepsilon_{M,m}(\omega)$ of $A$ in the asynchronous model: both $P_1$ and $P_2$ are honest, and they have inputs $v_1$ and $v_2$ respectively. 
In execution $\varepsilon_{M,0}(\omega)$, the scheduler uses the same strategy as that of execution $\varepsilon_{1, 2}(\omega)$:
the scheduler delays any message in execution $\varepsilon_{M,0}(\omega)$ until time $T = \max(T_1(\omega), T_2(\omega))$. Hence, 
$\party_1$ and $\party_2$ cannot distinguish between executions $\varepsilon_{M,0}(\omega)$ and $\varepsilon_{1,2}(\omega)$, implying that
$\varepsilon_{M,0}(\omega)$ decides $v$ by some time $T_0$. In execution $\varepsilon_{M,1}(\omega)$, the scheduler allows the first message to be delivered after exactly $\Delta$ time. Assume without loss of generality that this message is sent by $\party_1$. All further messages are delayed until time $T_1 > T_0$ (we define the exact time later): hence, $\party_1$ cannot distinguish from this execution and execution $\varepsilon_{M,0}(\omega)$, therefore it outputs $v$ by time $T_0$, as in execution $\varepsilon_{M,0}(\omega)$. $\party_2$, on the other hand, can distinguish between $\varepsilon_1(\omega)$ and $\varepsilon_{M,0}(\omega)$: however, it cannot distinguish between $\varepsilon_{M,1}(\omega)$ and an execution where the network is, in fact, synchronous, and $\party_1$ has crashed. If $\party_1$  has crashed, $\party_2$ has to output eventually, hence by time $T_1'$. Hence, the scheduler makes sure to delay all messages until time $T_1 > \max(T_0, T_1')$. Consequently, $P_2$ has to output $v$ in $\varepsilon_{M,1}(\omega)$ by time $T_1$.
The next executions are defined similarly: in execution $\varepsilon_{M, m}(\omega)$ with $m > 0$, the scheduler allows the first $m$ messages to be delivered after exactly $\Delta$ units of time. The remaining messages are delivered sufficiently late to ensure that the last message's sender is unable to distinguish between $\varepsilon_{M, m}(\omega)$ and execution $\varepsilon_{M,m-1}(\omega)$. Thus, the last message's sender outputs $v$ in execution $\varepsilon_{M, m}(\omega)$, which forces the other honest party to also output $v$.


We remark that executions $\varepsilon_{M,\ell}(\omega)$ and $\varepsilon_{M}(\omega)$ are indistinguishable, so $\varepsilon_{M}(\omega)$ will decide $v$. Since this holds for arbitrary values $v_1, v_2$, we have obtained that all canonical executions where no party crashes also output $v$, hence completing the proof.
\end{proof}

\subsection{Warm-up: $t_a := 0$} \label{subsec:setup:warmup}

As a warm-up towards the main result, we focus on the particular case $t_a := 0$, and show that the condition $n > 2 \cdot t_s$ is necessary. Intuitively, when $n \leq 2 \cdot t_s$, one cannot distinguish between a scenario where half of the parties crash in a synchronous network, and a scenario where half of the parties are honest but delayed in an asynchronous one. This way, the presence of the asynchronous case, even with no corruptions, allows two disjoint sets of honest parties to only run the protocol within their own set. Since the two sets run the protocol independently, the honest parties agree on a valid value only if the given validity property is trivial.
Note that this is the case even for weak validity, which can be solved in the synchronous model for up to $t_s < n$ corruptions. Our result implies that expecting a synchronous protocol to provide guarantees when it runs in a corruption-free asynchronous network impacts the overall resilience.


\begin{theorem} \label{thm:warm-up}
    Assume $t_s > 0$ and consider a validity property $\validity$. If $n = 2 \cdot t_s$ and there is a $(t_s, 0)$-secure $\ba$ protocol solving $\validity$, then $\validity$ is trivial.
\end{theorem}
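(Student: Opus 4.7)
The plan is to reduce \cref{thm:warm-up} to the preliminary lemma (\cref{lemma:agnostic-lemma}) via a two-party simulation, and then to invoke \cref{lemma:technical-end} to conclude triviality. Partition the $n = 2t_s$ parties into two disjoint groups $G_1, G_2 \subset \PS$ with $\abs{G_1} = \abs{G_2} = t_s$. Given the putative $(t_s,0)$-secure protocol $\Pi$ solving $\validity$, build a two-party protocol $A$ with players $\party'_1, \party'_2$ as follows: each $\party'_i$ takes as input an assignment of input values to the parties of $G_i$, internally simulates the $t_s$ instances of $\Pi$ corresponding to parties in $G_i$ using its randomness, dispatches intra-group messages internally (so they are ``delivered'' instantaneously, consistent with a synchronous schedule), and forwards every inter-group message across the two-party channel of $A$. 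The agreed-upon output of $A$ is defined to be the value decided by (say) the lowest-indexed simulated party that decides.

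Next, I would verify that $A$ meets the hypotheses of \cref{lemma:agnostic-lemma} under $n:=2$, $t_s:=1$, $t_a:=0$ with only crash faults. If $\party'_j$ crashes in a synchronous execution of $A$, the simulation is indistinguishable from an execution of $\Pi$ in a synchronous network in which all $t_s$ parties of $G_j$ crash at the outset: a legitimate $t_s$-bounded synchronous adversary for $\Pi$. If instead both $\party'_1, \party'_2$ are honest but the two-party channel is asynchronous, the induced execution of $\Pi$ is one in which all parties are honest and only the inter-group messages may be delayed arbitrarily, which is a legitimate asynchronous execution of $\Pi$ with zero corruptions. Since $\Pi$ is $(t_s,0)$-secure, in all these cases $\Pi$ satisfies agreement and probabilistic termination, so $A$ does as well. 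The preliminary lemma then yields that almost surely over the (shared) randomness $R$ of $A$, there is a single value $v = v(R)$ such that every canonical execution of $A$ decides $v$.

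To finish, I would fix an arbitrary maximal $I_1 \in \inputconfigs$ and pick any maximal $I_2 \in \inputconfigs$. The canonical execution of $\Pi$ on $I_j$ with randomness $R$ is precisely the canonical execution of $A$ on inputs $(I_j|_{G_1}, I_j|_{G_2})$ with randomness $R$, because under the canonical schedule all inter-group messages of $A$ are delivered exactly $\Delta$ time units after being sent, matching the canonical schedule of $\Pi$. By the previous step, almost surely over $R$ both canonical $A$-executions decide the common value $v(R)$, and hence so do the canonical $\Pi$-executions on $I_1$ and $I_2$. Applying \cref{lemma:technical-end} with this $I_1$ gives that $\validity$ is trivial.

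The main obstacle is conceptual bookkeeping rather than any deep inequality: namely, being precise that (i) the intra-group messages simulated inside a single $\party'_i$ genuinely correspond to messages delivered within the $\Delta$-bound, so a canonical execution of $A$ induces a canonical execution of $\Pi$, and (ii) a crash of $\party'_i$ in $A$ cleanly translates to a simultaneous crash of all $t_s$ parties of $G_i$ in $\Pi$ at time zero, which is admissible because $\abs{G_i} = t_s$. Once the simulation is set up carefully, the rest is a direct pipeline through \cref{lemma:agnostic-lemma} and \cref{lemma:technical-end}, the countability of $V_\inputt$ being exactly what allows the requisite almost-sure conclusions to survive the countable union over maximal input configurations.
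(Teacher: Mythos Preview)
Your proposal is correct and follows essentially the same approach as the paper: partition $\PS$ into two halves, have two simulated ``super-parties'' run $\Pi$ on those halves, check that the resulting two-party protocol meets the hypotheses of \cref{lemma:agnostic-lemma}, and finish via \cref{lemma:technical-end}. The only cosmetic difference is that the paper fixes $I_1,I_2$ in advance and gives $A$ the binary input set $\{0,1\}$ (selecting between them), whereas you give $A$ the richer input space $V_{\inputt}^{t_s}$; both work since $V_{\inputt}$ is countable, and your bookkeeping caveat about intra-group messages respecting the $\Delta$-delay (rather than being literally instantaneous) is exactly the detail the paper handles by stipulating delivery after exactly~$\Delta$.
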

\begin{proof}

Consider a (possibly randomized) $(t_s, 0)$-secure $\ba$ protocol $\Pi$ that solves $\val$ when $n = 2 \cdot t_s$. 
Let $I_1$ and $I_2$ be two arbitrary maximal input configurations.
We show that the canonical executions of $I_1$ and $I_2$ almost surely decide on the same output when run using the same randomness. Then, \cref{lemma:technical-end} ensures that $\validity$ is trivial.

\begin{figure}[H]
\centering
\includegraphics[scale=0.13]{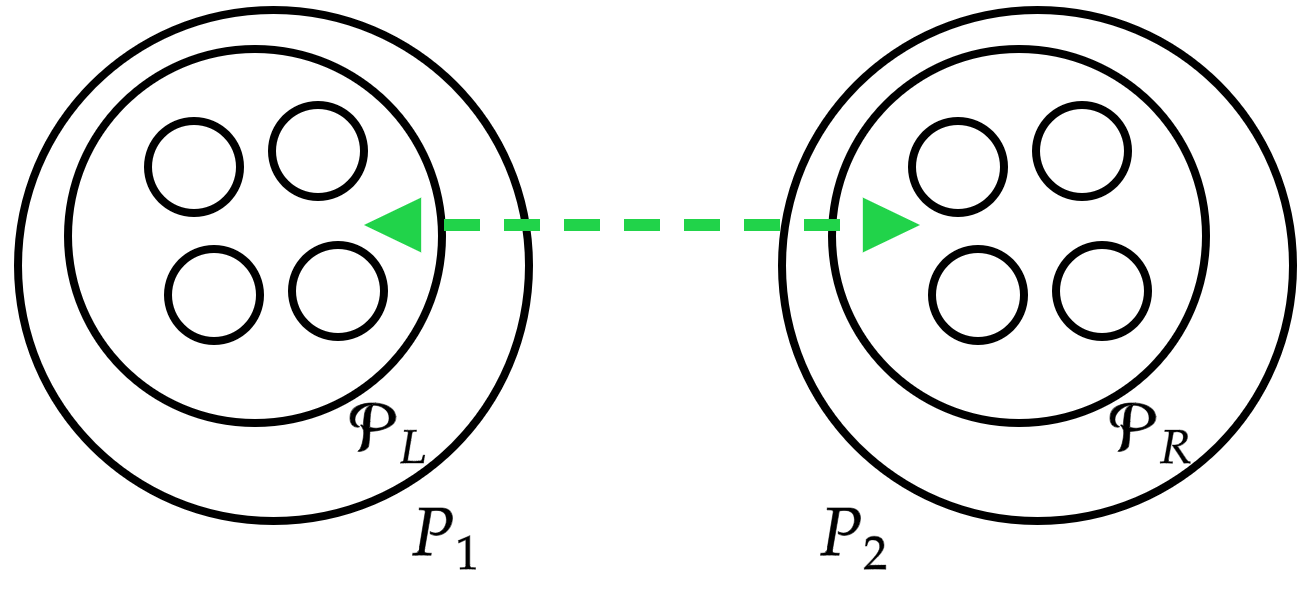}
\caption{$P_1$ simulates all parties of $\PS_L$ while $P_2$ simulates all parties of $\PS_R$. All the parties of $\PS$ communicate between one another not knowing they are being simulated.}\label{fig:simulation1}
\end{figure}

We use $\Pi$ to build a $2$-party protocol $A$ that matches the setting described in \cref{lemma:agnostic-lemma}. 
For protocol $A$, we consider the input set $\{0,1\}$ and output set $V_{\outputt}$, and we denote the two parties running $A$ by $P_1$ and $P_2$. Since $n = 2 \cdot t_s$, we may partition the set of $n$ parties $\PS$ into two sets $\PS_L$ and $\PS_R$ of $t_s$ parties each.

Then, as shown in Figure~\ref{fig:simulation1}, $\party_1$ will simulate the parties in $\PS_L$, while $\party_2$ will simulate the parties in $\PS_R$.
Concretely, in protocol $A$, $P_1$ proceeds as follows:

\begin{itemize}
    \item $P_1$ simulates all $t_s$ parties of $\PS_L$ running $\Pi$.
    \item If $P_1$ has input $0$, then the simulated parties in $\PS_L$ use their inputs from $I_1$. Otherwise, they use their inputs from $I_2$.
    \item Messages between the simulated parties in $\PS_L$ are received after exactly $\Delta$ units of time, and $\party_1$ forwards to $\party_2$ every message sent by a simulated party in $\PS_L$ to a party in $\PS_R$. Once $\party_2$ receives this message, it immediately forwards it to the simulated receiver in $\PS_R$.
    \item As soon as a party in $\PS_L$ decides a value, $P_1$ decides this value. $P_1$ continues forwarding messages as described above.
\end{itemize}

$P_2$ proceeds identically to $P_1$, switching $\PS_L$ and $\PS_R$.

Note that running $A$ in an asynchronous network where both $\party_1$ and $\party_2$ are honest corresponds to running $\Pi$ in an asynchronous network where all $n$ parties are honest. In addition, running $A$ in a synchronous network where at most one party may crash corresponds to running $\Pi$ in a synchronous network where at most $t_s$ parties may crash. Then, since $\Pi$ is a $(t_s, 0)$-secure $\ba$ protocol, $A$ achieves probabilistic termination and agreement in the setting described in \cref{lemma:agnostic-lemma}. 
Then, applying \cref{lemma:agnostic-lemma}, we obtain that: almost surely, there is a value $v$ such that all canonical executions of $A$ decide on the same value $v$. Moreover, the canonical execution of $A$ with input values $(0,0)$ matches the canonical execution of $I_1$ for $\Pi$. Similarly, the canonical execution of $A$ with input $(1,1)$ matches the canonical execution of $I_2$. As a consequence, canonical executions of $I_1$ and $I_2$ decide the same value almost surely. Using \cref{lemma:technical-end}, we can therefore conclude that $\validity$ is trivial. 
\end{proof}
\subsection{General Result} \label{subsec:setup:dificile}

We are now ready to prove the final statement of this section, presented below.
\begin{theorem}\label{theo:agnostic-lower-bound}
    Consider a validity property $\validity$, and $t_s, t_a$ such that $t_s > 0$ and $t_s \geq t_a$.
    If there is a $(t_s, t_a)$-secure $\ba$ protocol solving $\validity$ when $n \leq 2 \cdot t_s + t_a$, then $\validity$ is trivial.
\end{theorem}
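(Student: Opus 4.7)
Fix an arbitrary maximal input configuration $I_1 \in \inputconfigs$. By \cref{lemma:technical-end}, it suffices to show that for every other maximal $I_2 \in \inputconfigs$, the canonical executions of $\Pi$ on $I_1$ and on $I_2$ with identical randomness $R$ almost surely decide the same value; so fix such an $I_2$. The strategy mirrors the warm-up: reduce to \cref{lemma:agnostic-lemma} by building a two-party protocol $A$ whose super-parties simulate $\Pi$. The new ingredient is a third subset $\PS_M$ of the parties that is simulated by both super-parties and whose ``split'' behavior will account for up to $t_a$ byzantine corruptions in the asynchronous case. Concretely, partition $\PS = \PS_L \cup \PS_R \cup \PS_M$ with $|\PS_L| = |\PS_R| = t_s$ and $|\PS_M| = n - 2t_s \leq t_a$.

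Define a two-party protocol $A$ with input domain $\{0,1\}$ for super-parties $P_1, P_2$, both of which share the full randomness $R$ of $\Pi$. Super-party $P_1$ simulates $\PS_L$ together with its own internal copy of $\PS_M$; $P_2$ simulates $\PS_R$ together with its own copy of $\PS_M$. If $P_i$'s input is $0$, its simulated parties use their inputs from $I_1$; if it is $1$, they use $I_2$. Intra-$\PS_L$ messages are delivered internally to $P_1$ with inner delay $\Delta$; similarly for intra-$\PS_R$ inside $P_2$. Cross-messages $\PS_L \leftrightarrow \PS_R$ are relayed over the $A$-channel between $P_1$ and $P_2$; messages $\PS_L \to \PS_M$ are delivered to $P_1$'s copy of $\PS_M$ internally and simultaneously relayed to $P_2$'s copy, and symmetrically for $\PS_R \to \PS_M$. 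Inner and outer delays are chosen so that each cross-forwarded message arrives at both endpoints at inner time exactly $\Delta$ after being sent. Each super-party decides whatever its simulated $\PS_L$ (resp.\ $\PS_R$) parties decide.

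Next, verify that $A$ satisfies the hypotheses of \cref{lemma:agnostic-lemma}. In a synchronous execution of $A$ where one super-party crashes at the start, the survivor alone drives an inner synchronous execution of $\Pi$ in which the other super-party's set (of size $t_s$) is silent, i.e., has crashed; this is within $\Pi$'s synchronous tolerance. In an asynchronous execution of $A$ in which both super-parties are honest, the two copies of $\PS_M$ (possibly seeded from different input configurations) collectively play a valid byzantine strategy for $\PS_M$ in $\Pi$: toward $\PS_L$ the set $\PS_M$ appears as $P_1$'s copy, and toward $\PS_R$ as $P_2$'s copy, while $\PS_L \leftrightarrow \PS_R$ messages are delayed by the asynchronous $A$-channel. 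Since $|\PS_M| \leq t_a$, this is within $\Pi$'s asynchronous tolerance. Hence $\Pi$'s agreement and probabilistic termination transfer to $A$ in both regimes, and \cref{lemma:agnostic-lemma} applies.

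Applying the lemma yields that almost surely over $R$, all canonical executions of $A$ decide a common value $v$. Crucially, when $P_1, P_2$ both have input $b \in \{0,1\}$ and the network is canonical for $A$, both copies of $\PS_M$ are driven by the same inputs and the same randomness and receive, via the cross-forwarding, the same messages at the same inner times, so they evolve identically; the inner execution is therefore indistinguishable from the canonical execution of $\Pi$ on $I_{1+b}$ with randomness $R$. Thus the canonical executions of $\Pi$ for $I_1$ and $I_2$ both decide $v$ almost surely, and \cref{lemma:technical-end} concludes that $\validity$ is trivial. The delicate step, which I expect to be the main obstacle, is calibrating the cross-forwarding mechanism and its timing so that (i) in the $(b,b)$ case both copies of $\PS_M$ stay in perfect lockstep and the resulting inner execution is truly a canonical execution of $\Pi$ (not merely a synchronous one), while (ii) in the $(0,1)$ or $(1,0)$ case the divergence of the two copies is exactly a legitimate byzantine strategy for $\PS_M$ from the perspective of the honest parties in $\PS_L \cup \PS_R$.
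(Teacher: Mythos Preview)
Your proposal is correct and follows essentially the same approach as the paper: partition $\PS$ into $\PS_L, \PS_R, \PS_M$, have each super-party simulate its own side plus a private copy of $\PS_M$, cross-forward $\PS_L/\PS_R \to \PS_M$ messages so the two copies stay in lockstep in the synchronous same-input case, and treat the divergent copies as a byzantine $\PS_M$ (of size $\leq t_a$) in the asynchronous case. The one detail you leave implicit, and which the paper makes explicit, is the routing of messages \emph{from} $\PS_M$: $P_1$'s copy of $\PS_M$ talks only to $\PS_L$ (its messages to $\PS_R$ are discarded), and symmetrically for $P_2$; this is exactly what makes ``toward $\PS_L$ the set $\PS_M$ appears as $P_1$'s copy, and toward $\PS_R$ as $P_2$'s copy'' work cleanly, and it resolves the delicate point you flag at the end.
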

\begin{proof}
    Assume $n = 2 \cdot t_s + t_a$ and that there is a $(t_s, t_a)$-secure $\ba$ protocol $\Pi$ solving $\validity$. We  consider two input configurations $I_1$ and $I_2$ such that $\parties(I_1) = \parties(I_2) = \PS$ (i.e., all parties are honest). We show that, almost surely, the canonical executions of $I_1$ and $I_2$ decide on the same output. Then, \cref{lemma:technical-end} ensures that $\validity$ is trivial. Similarly to the proof of \cref{thm:warm-up}, we use $\Pi$ to build a two-party protocol $A$ that matches the setting of \cref{lemma:agnostic-lemma}. For $A$, we consider the input space $\{0, 1\}$ and the output space $V_{\outputt}$.

    This time, we partition $\PS$ into three sets $\PS_L$, $\PS_M$ and $\PS_R$ such that $|\PS_L| = |\PS_R| = t_s$ and $|\PS_M| = t_a$.
    A crucial difference from the proof of \cref{thm:warm-up} is that, as shown in \Cref{fig:dificile-figure}, each party in $\PS$ simulates its own copy
    of the parties in $\PS_M$.
    Our construction will ensure that, when $A$ runs in the synchronous setting, the two simulated copies of each party in $\PS_M$ will be in the exact same state at any point, maintaining the guarantees of $\Pi$ when at most $t_s$ of the parties are corrupted. Meanwhile, in the asynchronous setting, the copies of the $t_a$ parties in $\PS_M$ may be in different states, but $\Pi$ is able to tolerate $t_a$ byzantine parties in this case. 


    \begin{figure}[ht]
    \centering
    \includegraphics[scale=0.13]{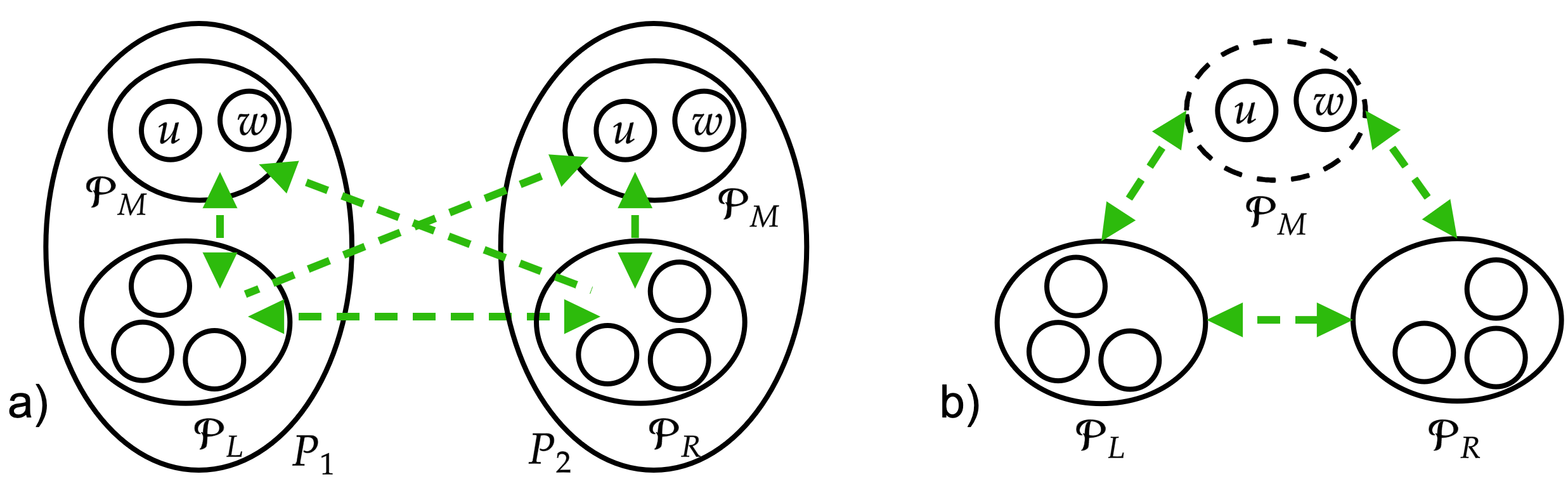}
    \caption{a) $P_1$ simulates all parties of $\PS_L$ and its own copies of the parties in $\PS_M$. $P_2$ simulates all parties of $\PS_R$, and its own copy of the parties $\PS_M$. The arrows show which simulated group of parties can send messages to which other group. \\ b) This is how the network looks like from the point of view of a party in $\PS_R$ or $\PS_M$: they are unaware of the second set $\PS_M$ being simulated in parallel.}\label{fig:dificile-figure}
    \end{figure}

    Concretely, in protocol $A$, party $P_1$ proceeds as follows:
    \begin{itemize}[nosep]
        \item $P_1$ simulates all $t_s + t_a$ parties of $\PS_L \cup \PS_M$ running $\Pi$.
        \item If $P_1$ has input $0$, then parties in $\PS_L \cup \PS_M$ take their input from $I_1$. Otherwise, they take their input from $I_2$.
        \item As depicted in Figure \ref{figure:magic-messages}, the messages sent between the parties simulated by $P_1$ are exchanged as if all $t_s + t_a$ parties are running independently: each such message is delivered after exactly $\Delta$ units of time. Additionally, once a simulated party in $\PS_L$ sends a message to a simulated party in $\PS_M$, $P_1$ immediately forwards this message to $P_2$ as well. Once $P_2$ receives this message, it immediately forwards it to its own simulated receiver in $\PS_M$. The message delay here depends on the type of network that $A$ is running in.
        \item Messages from a party in $\PS_L$ to $\PS_R$ are sent from $P_1$ to $P_2$. $P_2$ then forwards each such message to its simulated receiver in $\PS_R$. 
        \item $P_1$ discards any messages sent by the simulated parties in  $\PS_M$ to parties in $\PS_R$.
        \item As soon as a party in $\PS_L$ (but not $\PS_M$) decides a value, $P_1$ decides this value. $P_1$ continues forwarding messages as described above until all its simulated parties terminate.
    \end{itemize}
    The behavior of $P_2$ is the same as the behavior of $P_1$, switching $\PS_L$ and $\PS_R$.


    \begin{figure}[h]
    \centering
    \includegraphics[scale=0.13]{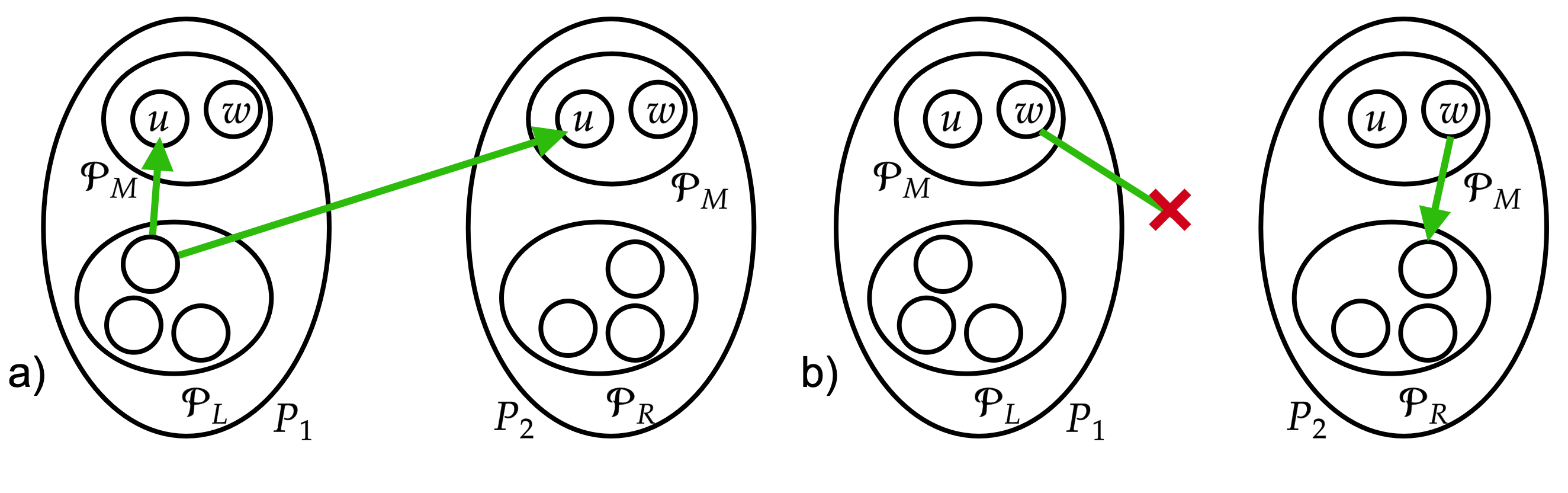}
    \caption{Examples of how different messages are handled. \\
    a) If a simulated party in $\PS_L$ wants to send a message to a party in $\PS_M$, then two identical messages are actually sent: the first one to the copy of the receiver simulated by $P_1$, and the second to the copy simulated by in $P_2$.\\
    b) If a party in $\PS_M$ wants to send a message to a party in $\PS_R$, if the two parties are simulated by the same entity (here $P_2$), then the message gets received as expected. Otherwise, the message is discarded and the party in $\PS_R$ never receives it.} \label{figure:magic-messages}
    \end{figure}

    We now need to analyze $A$ in the setting described by \cref{lemma:agnostic-lemma}. 
    Running $A$ in an asynchronous network where both parties are honest corresponds to running $\Pi$ in a network where the communication between parties in $\PS_L$ and $\PS_R$ is asynchronous. While the simulated copies in $\PS_M$ are not necessarily consistent, $\Pi$ is resilient against $t_a = \abs{\PS_M}$ byzantine corruptions, and therefore maintains its guarantees. Hence, $A$ achieves agreement and probabilistic termination in this setting.
    It remains to show that $A$ also achieves these guarantees in a synchronous network where one of the two parties may crash. Settings where both $\party_1$ and $\party_2$ are honest correspond to running $\Pi$ in a synchronous network where all $n$ parties are honest. We note that, in this case, for each party in $\PS_M$, the copy simulated by $\party_1$ and the copy simulated by $\party_2$ maintain the same state at all times.
    Settings where at most one of $\party_1$ and $\party_2$ may crash correspond to running $\Pi$ in a synchronous setting where the $t_s$ parties meant to be simulated only by the party crashing in $A$ are corrupted. As $\Pi$ tolerates $t_s$ corruptions, it maintains its guarantees. Hence, $A$ achieves agreement and probabilistic termination in this setting as well.
    

    We conclude the proof in an identical manner to the proof of \cref{thm:warm-up}. Applying Lemma~\ref{lemma:agnostic-lemma}, we obtain that, almost surely, all canonical executions of $A$ decide on the same value.
    Moreover, the canonical execution of $A$ with input values $(0,0)$ matches the canonical execution of $I_1$ for $\Pi$. Similarly, the canonical execution of $A$ with input $(1,1)$ matches the canonical execution of $I_2$.
    As a consequence, canonical executions of $I_1$ and $I_2$ decide the same value almost surely. Using \cref{lemma:technical-end}, we can therefore conclude that $\validity$ is trivial.
\end{proof}
\section{Lower Bound on $n$ Without Cryptographic Setup}

The previous section has proven that the condition $n > 2 \cdot t_s + t_a$ is necessary regardless of whether a cryptographic setup is available or not. We now focus on settings without cryptographic setup and prove an even stronger condition. We show that, in such settings, the condition $n > 3 \cdot t_s$ is necessary even in the synchronous model. Since a protocol achieving $(t_s, t_a)$-secure $\ba$ in the network-agnostic model also achieves $t_s$-secure $\ba$ in the synchronous model, this bound immediately applies to the network-agnostic model. We add that our result extends the requirement of $n > 3 \cdot t_s$ provided by Civit et al. \cite{PODC24:CGGKPV} for synchronous deterministic protocols to cover randomized protocols as well.

\subsection{Preliminary Lemma}

Similarly to the outline of \cref{sec:lower-bound-pki}, we first consider a setting with three parties, out of which at most one is byzantine. Afterwards, we focus on the general case. \Cref{lemma:3-parties} is an improvement over the result of Fischer, Lynch, and Merritt for weak validity \cite{PODC:FisLynMer85}: the result of \cite{PODC:FisLynMer85} only applies for weak validity and protocols must always decide in a finite amount of time, which is a lot stronger than probabilistic termination. Our proof uses the main core idea but improves it to lift these restrictions. Roughly,  we will be running $A$ in a larger ring containing multiple copies of each party, as depicted in Figure \ref{figure:shinier-ring}.

\begin{restatable}[h]{figure}{RingFigure}
\centering
\includegraphics[scale=0.13]{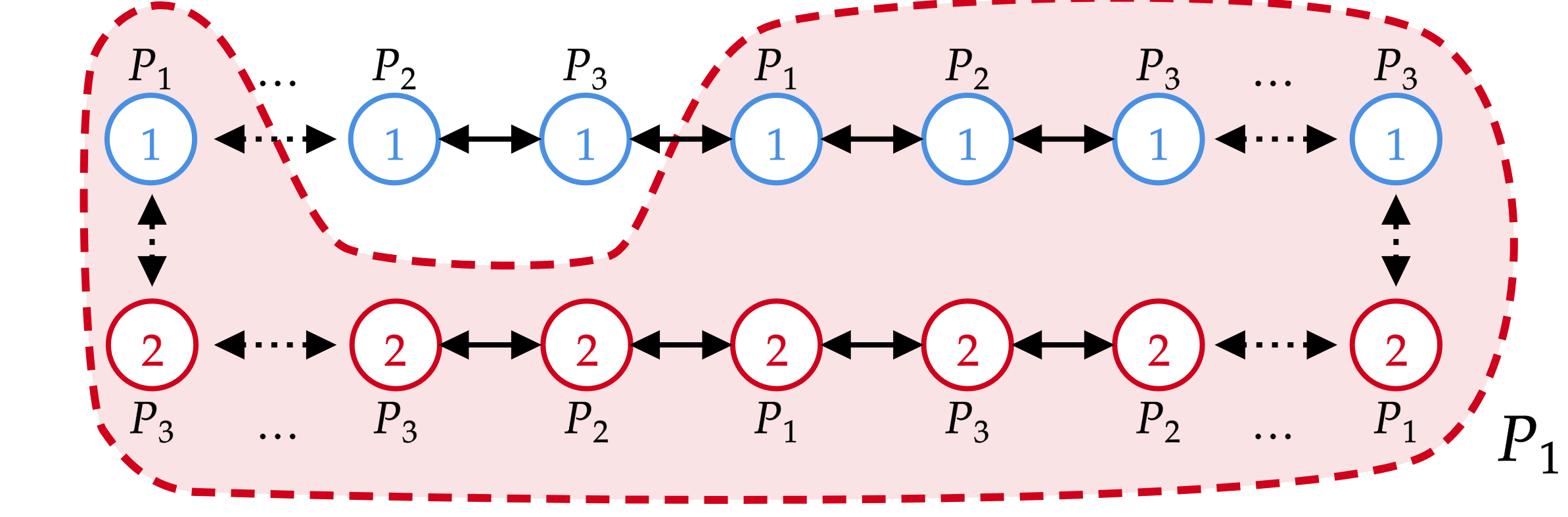}
\caption{Defining the behavior of a byzantine party (here $P_1$).}\label{figure:shinier-ring}
\end{restatable}

 The ring is constructed from two canonical executions with different input configurations. Parties adjacent in this ring cannot distinguish between the ring and the original setting of three parties, as the third party may be byzantine and simulate the rest of the ring.
This forces parties adjacent in the ring to output the same value, which implies that the two original executions lead to the same output. We defer the formal proof to Appendix~\ref{appendix:ring}.

\begin{restatable}{lemma}{RingLemma} \label{lemma:3-parties}
    Consider $n := 3$ parties in a synchronous network, and assume a protocol $A$ that achieves (probabilistic) termination and agreement in this setting even when up to one party is byzantine.
    Then, almost surely, all canonical executions of $A$ decide the same value.
\end{restatable}

\subsection{General Result}

To prove our main result in this setting, we use a strategy similar to the proof of \cref{thm:warm-up}. Concretely, we assume an $n$-party protocol that achieves $\val$ whenever $n \geq 3 \cdot t_s$ and use it to construct a three-party protocol that contradicts Lemma \ref{lemma:3-parties}. We defer the formal proof to Appendix~\ref{appendix:ring-extended}.
\begin{restatable}{theorem}{RingExtended}\label{theo:t-sync}
    Assume $t_s > 0$ and consider a validity property $\validity$. If there is a $t_s$-secure $\ba$ protocol solving $\validity$ in the synchronous model when no cryptographic setup is available and $n \leq 3 \cdot t_s$, then $\validity$ is trivial.
\end{restatable}

Then, as any lower bound in the synchronous model also holds in the network-agnostic model, \cref{theo:t-sync} immediately implies the following corollary.
\begin{corollary}\label{corollary:t-sync}
    
    Consider a validity property $\validity$, and let $t_s, t_a$ such that $t_s > 0$ and $t_s \geq t_a$.
    If there is a $(t_s, t_a)$-secure $\ba$ protocol solving $\validity$ when no cryptographic setup is available and $n \leq 3 \cdot t_s$, then $\validity$ is trivial.
    
\end{corollary}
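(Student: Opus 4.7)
The plan is to derive this corollary as a direct consequence of \cref{theo:t-sync} by observing that any network-agnostic BA protocol induces a purely synchronous BA protocol with the same synchronous resilience threshold. Concretely, if $\Pi$ is a $(t_s, t_a)$-secure $\ba$ protocol solving $\validity$ without cryptographic setup, then the definition of $(t_s, t_a)$-security stipulates that $\Pi$ achieves probabilistic termination, agreement, and validity (for $\validity$) whenever it is executed in a synchronous network with at most $t_s$ byzantine parties. This is precisely the defining property of a $t_s$-secure synchronous $\ba$ protocol solving $\validity$.

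Hence, the first (and essentially only) step is to note that $\Pi$, viewed as a synchronous protocol, witnesses the hypothesis of \cref{theo:t-sync}: there exists a $t_s$-secure $\ba$ protocol solving $\validity$ in the synchronous model under the same assumption of no cryptographic setup and with $n \leq 3 \cdot t_s$. Since \cref{theo:t-sync} was proved for arbitrary randomized synchronous protocols (with probabilistic termination) and does not rely on any features beyond what a network-agnostic protocol automatically provides in the synchronous regime, its conclusion applies verbatim.

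Applying \cref{theo:t-sync} to $\Pi$ then yields that $\validity$ is trivial, completing the proof. There is no substantive obstacle here; the only thing to verify is that the synchronous guarantees required by \cref{theo:t-sync}, namely (probabilistic) termination, agreement, and validity against up to $t_s$ synchronous byzantine corruptions, are literally a subset of the guarantees demanded of a $(t_s, t_a)$-secure protocol. The non-cryptographic-setup assumption carries over unchanged since we are using the very same protocol $\Pi$ and the very same adversarial model in the synchronous branch of its security definition.
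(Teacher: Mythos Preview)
Your proposal is correct and takes essentially the same approach as the paper: both argue that a $(t_s, t_a)$-secure $\ba$ protocol is in particular a $t_s$-secure synchronous $\ba$ protocol, and then invoke \cref{theo:t-sync}. The paper states this in a single sentence, while you spell out the (straightforward) verification that the synchronous-branch guarantees of the network-agnostic definition match those required by \cref{theo:t-sync}.
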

\section{Similarity Condition}

The lower bounds on $n$ presented in the previous sections are indeed necessary, but not yet sufficient. 
We may instantiate, for instance, the input space as the (finite) set of vertices of a (publicly known) labeled graph with maximum clique size $\omega \geq 3$. We consider convex validity, under the so-called \emph{monophonic convexity}. For this example, network-agnostic $\ba$ requires the stronger lower bound $n > \max(\omega \cdot t_s, \omega \cdot t_a + t_s, 2 \cdot t_s + t_a)$ \cite{constantinescu_ghinea_convex_consensus}. Roughly, if any of the conditions $n > \omega \cdot t_s$ and $n > \omega \cdot t_a + t_s$ fails to hold, one can find scenarios where the simple presence of byzantine parties (following the protocol correctly, with inputs of their choice) prevents the honest parties from obtaining a valid output. In this section, we prove the need of one more condition that captures these validity-dependent requirements, and enables the honest parties to find a valid output even if their view over the honest inputs is not accurate. This additional condition matches the \emph{similarity} condition of \cite{PODC23:CGGKV} for the partially synchronous model, and the \emph{containment} condition of \cite{PODC24:CGGKPV} in the synchronous model.

We need to establish a few notions. The first is the notion of \emph{neighbors} of an input configuration. The neighbors of $I$, denoted by $\neighbors(I)$, are the input configurations $J$ such that the parties in $\parties(I) \cap \parties(J)$ hold the same input values in $I$ and $J$.
Formally, $\neighbors(I) := \{ J \in \inputconfigs: \  \forall \party \in \PS, \ \text{if} \ (v_1, \party) \in I \ \text{and} \ (v_2, \party) \in J \ \text{then} \ v_1 = v_2\}$.
The definition of neighbors is symmetric (if $J \in \neighbors(I)$ then $I \in \neighbors(J)$).

The second notion is that of \emph{similar configurations} of an input configuration $I$, denoted by $\similar(I)$. These are input configurations that may be indistinguishable from $I$. In the synchronous model, these are configurations $J \in \neighbors(I)$ such that $J \subseteq I$. Roughly, this models scenarios where some of the parties are corrupted, but follow the protocol correctly with inputs of their own choice. In the asynchronous model, these are configurations $J \in \neighbors(I)$ containing $n - t_a$ parties: this  additionally takes into account that at most $t_a$ honest parties may be isolated due to network delays. Hence, we define $\similar(I)$ as
$
    \similar(I) = \{ J \in \neighbors(I): J \subseteq I \} \bigcup \{ J \in \neighbors(I): |J| \geq n - t_a \}.
$



We may now define the similarity condition, which \cref{lemma:similarity} proves to be necessary for the network-agnostic model. 
\cref{lemma:similarity} adapts Theorem 2 of \cite{PODC23:CGGKV} and Lemma 8 of \cite{PODC24:CGGKPV} to the network-agnostic model, and it relies on standard indistinguishability arguments. We defer the proof to Appendix \ref{appendix:similarity}.

\begin{definition}[Similarity condition]
We say that a validity property $\validity$ satisfies the \emph{similarity condition} if there is a Turing-computable function $\sigma : \inputconfigs \mapsto V_O$ such that, for any input configuration $I \in \inputconfigs$, $\sigma(I) \in \bigcap_{J \in \similar(I)} \validity(J)$.
\end{definition}

Note that the existence of such a function $\sigma$ also implies  $\forall I \in \inputconfigs, \bigcap_{J \in \similar(I)} \validity(J) \neq \emptyset$. 

\begin{restatable}{lemma}{SimilarityLemma} \label{lemma:similarity}
    If a validity property $\validity$ is solvable in the network-agnostic model, then $\validity$ satisfies the similarity condition.
\end{restatable}

It is worth noting that, in the proof of this lemma, the deterministic Turing function $\sigma$ is defined based on a randomized protocol that solves $\validity$. This is justified because, as discussed in the proof, when time complexity is not taken into account, probabilistic Turing machines are as expressive as deterministic ones.

\section{Sufficiency and Main Result}

We now show that the conditions presented in the previous sections are not only necessary, but also sufficient, hence proving our main result, stated below.
\begin{theorem}\label{thm:main-thm}
    Assume a non-trivial validity condition $\validity$. Then, there is a $(t_s, t_a)$-secure protocol solving $\validity$ if and only if the following conditions hold:
    \begin{itemize}[nosep]
        \item $\validity$ satisfies the similarity condition.
        \item $n > 3 \cdot t_s$ or, if PKI is available, $n > 2 \cdot t_s + t_a$.
    \end{itemize}
\end{theorem}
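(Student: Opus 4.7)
The necessity direction follows by combining Theorem~\ref{theo:agnostic-lower-bound}, Corollary~\ref{corollary:t-sync} and Lemma~\ref{lemma:similarity}, so my plan is to focus on sufficiency: constructing, under the stated bounds on $n$ and assuming the similarity condition holds, a universal $(t_s, t_a)$-secure protocol solving $\validity$. The construction generalises the protocol of \cite{constantinescu_ghinea_convex_consensus} by replacing their convex-hull selector with the function $\sigma$ provided by the similarity condition, which I will treat as a black-box Turing-computable map $\inputconfigs \to V_{\outputt}$.

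The plan is to organise the protocol in three layers. First, every party reliably broadcasts its input via a network-agnostic reliable broadcast primitive, which exists under $n > 2 t_s + t_a$ with PKI and under $n > 3 t_s$ without; each honest party $\party_i$ maintains an evolving view $V_i \in \inputconfigs$ collecting the (party, value) pairs it has delivered so far. Second, after a synchronous timeout each honest party computes a candidate $c_i := \sigma(V_i)$ and feeds it into a network-agnostic strong-validity $\ba$ as in \cite{TCC:BKL19, EUROCRYPT:Mose24}. Third, the parties fall through to an asynchronous randomised $\ba$ component that only accepts proposals of the form $\sigma(W)$ for some $W$ that the proposer can exhibit as its delivered view; since $\sigma$ is Turing-computable and reliable broadcast provides eventual consistency across honest parties, this witness check needs no cryptographic setup.

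The validity invariant at the heart of the argument — which I would isolate as a lemma — is that for every honest view $V_i$ arising in the execution, the true honest input configuration $I^*$ lies in $\similar(V_i)$. In a synchronous network the sync timeout forces $I^* \subseteq V_i$ with matching values, placing $I^*$ in the first clause of $\similar$; in an asynchronous network one has $|I^*| \geq n - t_a$ and $V_i \in \neighbors(I^*)$ by the consistency of reliable broadcast, placing $I^*$ in the second clause. The similarity condition then yields $\sigma(V_i) \in \validity(I^*)$, so every candidate produced by an honest party is valid, and the witness-validated $\ba$ certifies that the final decision is of this form.

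The main obstacle, as always in network-agnostic protocol design, is orchestrating the synchronous attempt and the asynchronous fallback so that both worlds are served without either breaking the other: in the synchronous world the reliable broadcast gives all honest parties the same $V_i$ and hence the same $c_i$, and strong validity of the inner $\ba$ forces the common decision; in the asynchronous world the synchronous attempt must not violate agreement, while the randomised asynchronous $\ba$ must eventually terminate on a validated candidate. This interplay is exactly the two-layer skeleton of \cite{constantinescu_ghinea_convex_consensus,TCC:BKL19}, which transfers to our setting once the similarity lemma above is in place: probabilistic termination comes from the randomised asynchronous layer, agreement from the inter-phase locks of the skeleton, and validity from the similarity argument applied to the accepted witness.
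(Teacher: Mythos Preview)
Your necessity direction matches the paper exactly. For sufficiency, however, the paper takes a markedly simpler route than your three-layer skeleton: it invokes a single network-agnostic primitive, \emph{Agreement on a Core-Set} ($\acs$) from \cite{constantinescu_ghinea_convex_consensus} (\cref{thm:convex-world:acs}), which directly hands all honest parties the \emph{same} set $I \in \inputconfigs$ satisfying (i) integrity (honest parties appearing in $I$ carry their true inputs) and, (ii) in the synchronous case, honest core ($\parties(I)$ contains all honest parties). Every party then simply outputs $\sigma(I)$. Your central validity invariant --- that the true honest configuration lies in $\similar$ of the relevant view --- is precisely the paper's argument that $H \in \similar(I)$, so the core of the proof coincides; the difference is that $\acs$ collapses your reliable-broadcast layer, your synchronous strong-validity $\ba$, and your asynchronous witness-validated fallback into one box that already produces an \emph{agreed-upon} $I$.

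The paper's decomposition sidesteps the most delicate part of your sketch, namely the witness-validated asynchronous $\ba$. In your design a byzantine proposer can submit $\sigma(W)$ with a fabricated $W$, so for validity you need every \emph{accepted} $W$ --- not just honest $V_i$'s --- to satisfy $I^* \in \similar(W)$; this forces the witness predicate to certify that each pair in $W$ matches an actual reliable-broadcast delivery, and you must argue that this predicate is simultaneously sound, eventually evaluable by all honest parties, and live for honest proposals. Your claim that ``in the synchronous world reliable broadcast gives all honest parties the same $V_i$'' also needs justification for byzantine senders (it holds, but only via the totality property after a suitably long timeout). None of this is fatal --- the approach can be completed --- but these are real gaps relative to the paper, where $\acs$ delivers a common $I$ with integrity outright and there is nothing left to validate. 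Conversely, your decomposition has the virtue of exposing which sub-guarantees each layer must supply, whereas the paper treats $\acs$ as a black box.
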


The remainder of the section describes a protocol that matches our necessary conditions, as stated in the lemma below. 
\cref{thm:main-thm} then follows from combining Lemma  \ref{lemma:matching-protocol} with the requirements discussed previously: the lower bounds on $n$, described in \cref{theo:agnostic-lower-bound} and \cref{corollary:t-sync}, plus the similarity condition, proven to be necessary in \cref{lemma:similarity}.
\begin{lemma}\label{lemma:matching-protocol}
    Assume a validity condition $\validity$ that satisfies the similarity condition. Then, if $n > 3 \cdot t_s$, or, assuming that PKI is available, if $n > 2 \cdot t_s + t_a$, there is a $(t_s, t_a)$-secure  $\ba$ protocol solving $\validity$.
\end{lemma}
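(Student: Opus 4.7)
The plan is to construct a randomized protocol in three phases, using the function $\sigma$ from the similarity condition as a black-box recipe to turn any party's view into a valid output. The protocol is a natural generalization of the one in \cite{constantinescu_ghinea_convex_consensus}, replacing the ad-hoc convex-hull selection by $\sigma$.

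In the dissemination phase, each honest party reliably broadcasts its input and assembles a view $I_i \in \inputconfigs$ using the standard network-agnostic dual-timeout schedule: wait first for a synchronous timeout (after which, in synchronous executions, every honest broadcast is delivered) and then until at least $n - t_a$ broadcasts have been delivered in total. A short case analysis shows that in both network regimes $I_H \in \similar(I_i)$: in synchrony $I_H \subseteq I_i$ and honest inputs agree, while in asynchrony $|I_H| \geq n - t_s \geq n - t_a$ and $I_H$ is a neighbor of $I_i$. Each honest party then computes $v_i := \sigma(I_i)$, which by the similarity condition belongs to $\validity(J)$ for every $J \in \similar(I_i)$, and in particular to $\validity(I_H)$. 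So every honest party already holds a value that would be a valid output.

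To agree on a single candidate, I would iterate a leader-based loop: a common coin selects a leader $\ell$, who broadcasts $(I_\ell, v_\ell)$ together with $n - t_s$ endorsements from parties whose views yield the same $\sigma$-value; a graded-consensus-style subprotocol then tries to commit $v_\ell$, falling through to the next iteration otherwise. The building blocks (reliable broadcast, a network-agnostic common coin, and a network-agnostic $\ba$ with intrusion-tolerant strong validity) are all known to exist under the hypothesized resilience bounds $n > 2 t_s + t_a$ (with PKI) and $n > 3 t_s$ (without), via Blum--Katz--Loss-style constructions. Since every honest candidate is already valid and the endorsement mechanism forces any committed value to coincide with $\sigma$ applied to a configuration similar to $I_H$, validity is preserved automatically, while agreement and probabilistic termination follow from the standard analysis of the leader-election loop converging in a constant expected number of iterations.

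The main obstacle I anticipate is calibrating the certification step so that it simultaneously prevents a Byzantine leader from committing a value outside $\validity(I_H)$ and yet lets honest leaders succeed in both synchrony regimes. The key invariant is that any $(n-t_s)$-sized endorsement set intersects the honest parties in enough members to force the endorsed view to itself be similar to $I_H$---this is exactly where the resilience bounds on $n$ enter, mirroring their role in prior network-agnostic protocols. Once this invariant is in place, the remainder of the analysis follows standard templates.
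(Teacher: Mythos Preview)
The paper's proof is far simpler than your proposal: it invokes the network-agnostic \emph{Agreement on a Core-Set} ($\acs$) primitive from \cite{constantinescu_ghinea_convex_consensus} as a black box. Parties run $\Pi_{\acs}$ on their inputs, obtain a \emph{common} configuration $I \in \inputconfigs$, and all output $\sigma(I)$. The analysis is three lines: $\acs$'s integrity gives $H \in \neighbors(I)$; in synchrony, the honest-core property gives $H \subseteq I$; in asynchrony, $|H| \geq n - t_a$ because at most $t_a$ parties are corrupted. Either way $H \in \similar(I)$, so $\sigma(I) \in \validity(H)$. Agreement and termination are inherited directly from $\acs$.

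Your plan instead has each party form its own view $I_i$ and then runs a leader-election loop with endorsements to settle on one candidate. This is essentially rebuilding $\acs$ from lower-level pieces, and you yourself flag the hard step---calibrating certification so that a Byzantine leader cannot push an invalid value while honest leaders still succeed---as unresolved. There is also a concrete liveness worry you do not address: distinct honest parties may hold different $I_i$'s and hence different $\sigma(I_i)$'s, so an honest leader need not collect $n - t_s$ endorsements ``from parties whose views yield the same $\sigma$-value''; without a fix, the loop could stall even with only honest leaders. Finally, a small slip: you write ``$|I_H| \geq n - t_s \geq n - t_a$'', but since $t_a \leq t_s$ the second inequality is reversed; the correct reason $|I_H| \geq n - t_a$ in asynchrony is simply that at most $t_a$ parties are corrupted there. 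In short, your route might be completable, but the paper sidesteps all of this by using $\acs$ off the shelf.
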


Our construction behind Lemma \ref{lemma:matching-protocol} generalizes the network-agnostic $\ba$ protocol of \cite{constantinescu_ghinea_convex_consensus} that solves convex validity. The parties distribute their input values using a protocol achieving \emph{Agreement on a Core-Set} ($\acs$), which provides them with an identical view over the inputs, i.e., with a potential input configuration. This is a (randomized) communication primitive enabling identical views in the pure asynchronous model \cite{STOC:BCG93,PODC:BKR94}. We make use of the $\acs$ definition of \cite{constantinescu_ghinea_convex_consensus}, included below, which differs from the standard definition by providing stronger properties in the synchronous model: it additionally ensures that all honest inputs are included in the common view if the network is synchronous.
This property is essential for matching the higher resilience threshold $t_s$ in a synchronous network. This way, the parties can apply a deterministic decision over the view agreed upon in $\acs$ and obtain an output.

\begin{definition} \label{definition:acs}
 Let $\Pi$ be a protocol where every party $\party_i$ holds an input $v_i$ and outputs a set $I_i \in \inputconfigs$ consisting of at least $n - t_s$ value-sender pairs. We consider the following properties in addition to those presented in \cref{sec:preliminaries}:
    \begin{itemize}[nosep]
    \item \textbf{Integrity}: If $\party_i$ and $\party_j$ are both honest and $\party_i \in \parties(I_j)$, then $(\party_i, v_i) \in I_j$.
            
    \item \textbf{Honest Core}: If an honest party outputs $I$, then $\parties(I)$ contains all honest parties.
    \end{itemize}
    Then, we say that $\Pi$ is a $(t_s, t_a)$-secure $\acs$ protocol if it achieves the following:
    \begin{itemize}[noitemsep,nolistsep]
        \item Probabilistic termination, agreement, integrity, and honest core when running in a synchronous network where up to $t_s$ parties are corrupted;
        \item Probabilistic termination, agreement, and integrity when running in an asynchronous network where up to $t_a$ parties are corrupted.
    \end{itemize}
\end{definition}

We make use of the $\acs$ construction of \cite{constantinescu_ghinea_convex_consensus}, described by the result below.
\begin{theorem}[\cite{constantinescu_ghinea_convex_consensus}]\label{thm:convex-world:acs}
    Consider $n, t_s, t_a$ such that $t_a \leq t_s$. If  $3 \cdot t_s < n$, or, if PKI is available and $2 \cdot t_s + t_a < n$, there is a $(t_s, t_a)$-secure $\acs$ protocol $\Pi_{\acs}$.
\end{theorem}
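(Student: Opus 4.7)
The plan is to construct $\Pi_{\acs}$ by adapting the classical Ben-Or--Canetti--Goldreich template to the network-agnostic setting. The construction relies on two network-agnostic building blocks: a reliable broadcast primitive $\rbc$ and a binary agreement primitive $\ba$, both satisfying $(t_s, t_a)$-security. Both admit constructions under the stated resilience bounds: $\rbc$ via Bracha-style echo/ready voting when $n > 3 t_s$, or via a Dolev--Strong-style broadcast paired with an asynchronous fallback when a PKI is available and $n > 2 t_s + t_a$; and $\ba$ via the network-agnostic constructions of Blum, Katz, and Loss and its follow-ups, which achieve probabilistic termination under exactly our bounds.

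Each party $\party_i$ reliably broadcasts its input $v_i$ through an instance $\rbc_i$, and in parallel the parties run binary agreements $\ba_1, \ldots, \ba_n$, where $\ba_j$ decides whether $\party_j$ is included in the core set. A party inputs $1$ to $\ba_j$ the moment $\rbc_j$ delivers a value; after a synchronous timeout $T$ chosen to exceed the worst-case synchronous latency of $\rbc$, a party inputs $0$ to any $\ba_j$ still missing its value. Once $n - t_s$ of the $\ba_j$'s have decided $1$, a party inputs $0$ to every $\ba_j$ it has not yet provided with an input. The output is the set $\{(v_j, \party_j) : \ba_j$ decided $1$ and $\rbc_j$ delivered $v_j\}$.

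The main obstacle is the \emph{honest core} property, required only in the synchronous setting and strictly stronger than what standard asynchronous $\acs$ guarantees. The key is calibrating $T$: under synchrony, $\rbc_j$ of every honest $\party_j$ delivers to every honest party strictly before $T$ expires, so every honest party inputs $1$ to $\ba_j$ and, by the validity of $\ba$, $\ba_j$ outputs $1$, placing $\party_j$ in every honest core set. Under asynchrony $T$ may fire prematurely for some honest broadcasts, but the $\ba_j$'s still preserve $(t_s, t_a)$-agreement, and the standard argument shows that once enough agreements decide $1$ the remaining ones are driven to termination via the force-input step. The delicacy is ensuring that the sync timeout does not compromise async liveness; this is resolved because once $T$ fires the protocol behaves exactly like the asynchronous ACS template, whose termination relies solely on $\ba$ liveness and the force-input mechanism.

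The remaining properties follow directly: agreement on the output set follows from agreement of each $\ba_j$ together with the consistency of $\rbc_j$; integrity follows from $\rbc$'s sender-faithfulness when $\party_j$ is honest, which guarantees that any value attributed to an honest $\party_j$ in any output equals $v_j$; and probabilistic termination follows from that of the $\ba_j$'s, since once $n - t_s$ (respectively $n - t_a$ in the async case) agreements decide $1$ the remaining ones terminate in bounded expected time after being fed $0$.
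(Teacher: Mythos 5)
This theorem is not proved in the paper at all: it is imported verbatim from \cite{constantinescu_ghinea_convex_consensus}, so there is no in-paper proof to compare against. Judged on its own merits, your BKR-style template (reliable broadcasts $\rbc_1,\dots,\rbc_n$ gating $n$ binary agreements $\ba_1,\dots,\ba_n$) is the right general shape, and your synchrony argument for the honest-core property is plausible. But there is a genuine gap in the asynchronous case, and it sits exactly where you claim the ``delicacy'' is resolved.

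The problem is the timeout rule ``after local time $T$, input $0$ to every $\ba_j$ whose $\rbc_j$ has not yet delivered.'' In an asynchronous execution the adversary may delay \emph{every} $\rbc_j$ past every honest party's local time $T$. Then every honest party inputs $0$ to every $\ba_j$, strong validity of binary $\ba$ forces all $n$ instances to decide $0$, and the output set is empty. This violates the requirement that the output lie in $\inputconfigs$ (i.e., have size at least $n - t_s$), which is essential downstream: $\sigma$ is only defined on $\inputconfigs$, and the validity argument in the proof of \cref{lemma:matching-protocol} needs the agreed set $I$ to be a bona fide input configuration. Your stated resolution --- ``once $T$ fires the protocol behaves exactly like the asynchronous ACS template'' --- is precisely backwards: the asynchronous template \emph{never} inputs $0$ on a timer; it inputs $0$ only after $n-t$ agreements have already decided $1$, and that ordering is what guarantees the core-size lower bound. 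Adding an unconditional timeout destroys that guarantee. Reconciling the two triggers (you need the timeout for honest core in synchrony, but you must not let it starve the core in asynchrony) is the actual technical content of a network-agnostic $\acs$, and it requires an additional mechanism --- e.g., a certified synchronous pre-round, or a binary agreement whose validity is conditioned on how many honest parties input $1$ --- that your sketch does not supply. A secondary, smaller omission: for agreement on the output \emph{values} you implicitly need that $\ba_j$ deciding $1$ implies some honest party input $1$ (so that $\rbc_j$'s totality makes the value deliverable to everyone); this non-intrusion property of binary $\ba$ should be stated as a requirement on the building block.
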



We may now present the proof of \cref{lemma:matching-protocol}, describing our protocol.

\begin{proof}[Proof of \cref{lemma:matching-protocol}]
    Our $\ba$ protocol solving $\validity$ proceeds as follows: the parties distribute their input values using the protocol $\Pi_\acs$ described in \cref{thm:convex-world:acs}. $\Pi_\acs$ provides the parties with the same output set $I \in \inputconfigs$ of at least $n - t_s$ value-sender pairs, representing a potential input configuration. Once the parties obtain this set, they output $\sigma(I)$, where $\sigma$ is the Turing-computable function provided by $\validity$ satisfying the similarity condition.

    \begin{dianabox}{$(t_s, t_a)$-secure $\ba$ solving $\validity$}
    	\algoHead{Code for party $\party_i$ with input $v_i$}
    	\begin{algorithmic}[1]
                \State Join $\Pi_\acs$ with input $v_i$ and obtain the output set $I$.
                \State Output $\sigma(I)$ and terminate.
    	\end{algorithmic}
    \end{dianabox}
 
     Regardless of whether the network is synchronous or asynchronous, since $\Pi_{\acs}$ achieves probabilistic termination, our $\ba$ protocol achieves probabilistic termination as well. In addition, since $\Pi_{\acs}$ achieves agreement, the parties compute their output identically, and therefore our $\ba$ protocol achieves agreement.

     It remains to prove that the honest parties' output is in $\validity(H)$, where $H$ denotes the (actual) input configuration (containing only the honest parties and their inputs). Since $\sigma(I) \in \bigcap_{J \in \similar(I)} \validity(J)$, it will be sufficient to show that $H \in \similar(I)$.
     Regardless of the type of network, the integrity property of $\Pi_{\acs}$ ensures that $H \in \neighbors(I)$. If the network is asynchronous, $\abs{H} \geq n - t_a$ and therefore $H \in \similar(I)$.
     Otherwise, if the network is synchronous,
     the honest core property of $\Pi_{\acs}$ ensures that honest parties and their inputs are included in $I$. Then, $H \subseteq I$ and therefore $H \in \similar(I)$ in this case as well. Thus, as $\sigma(I) \in \bigcap_{H \in \similar(I)} \validity(H)$, we get that the value agreed upon is in $\validity(H)$, which proves the validity of the protocol and concludes the proof.
\end{proof}




\section{On Uncountable Input and Output Sets} \label{section:uncountable-stuff}

In this section, we make a remark on the importance of assuming that either the input set $V_{\inputt}$ is countable, the output set $V_{\outputt}$ is countable, or that the probability space $\Omega$ has some limitations. For our proofs, we assumed that $V_{\inputt}$ is countable, and it is possible to adapt all arguments for the other case. We also assumed that only a finite amount of random bits can be retrieved at a given instant. In contrast, perhaps surprisingly, not only our arguments cannot be adapted for the case where both $V_{\inputt}$ and $V_{\outputt}$ are uncountably infinite, but, in the following, we describe a brief counterexample to the fact that $n > 2 \cdot t_s + t_a$ is a requirement for a non-trivial validity property to be solvable. This example requires a different probability space, allowing a real number to be sampled in a single step.

For the uncountably infinite sets $V_{\inputt} = V_{\outputt} := [0, 1]$, we define a validity property $\validity^{\star}$ as follows: if all honest parties hold the same input $v$, they may output any value in $[0, 1]$ except for $v$; otherwise, they may output any value in $[0, 1]$.
Hence, $\validity^{\star}(I) = [0, 1] \setminus \{v : \textit{all parties in } I \textit{ have input v}\} $. 

Note that this validity property is non-trivial as $\bigcap_{v \in [0, 1]} \left( [0, 1] \setminus \{v\} \right) = \emptyset$.

This property escapes our requirement of $n > 2 \cdot t_s + t_a$ if a very strong shared randomness setup is provided. Under the assumption that the parties are provided with a shared uniform random value $s \in [0, 1]$, we can design a $(n - 1, n - 1)$-secure $\ba$ protocol where parties do not even communicate. We present the protocol below.

\begin{dianabox}{$(n - 1, n - 1)$-secure $\ba$ solving $\validity^{\star}$}
	\algoHead{Code for party $\party_i$ with input $v_i$}
	\begin{algorithmic}[1]
            \State Let $s :=$ the shared uniform random value in $[0, 1]$.
		\State If $v_i = s$, never terminate. Otherwise, output $s$ and terminate.
	\end{algorithmic}
\end{dianabox}

The non-terminating executions of this protocol are those where $s$ matches some honest party's input, hence they occur with probability $0$. That is, all honest parties decide on an output almost surely, which satisfies our probabilistic termination definition requirement. If two honest parties obtain outputs, then they both output the same value $s$. Hence, the agreement property holds. Finally, since no honest party outputs its own input, our protocol solves $\validity^\star$. 
If either $V_{\inputt}$ or $V_{\outputt}$ is countable, our counterexample fails. We note that restrictions on the protocol's setup may also be promising for extending our results to settings where both $V_{\inputt}$ and $V_{\outputt}$ uncountable: if $s$ comes from a finite domain, then the probability of obtaining a non-terminating execution would not be $0$.
\section{Conclusions and Future Work}
We investigated the conditions that a validity property needs to satisfy in order to be solvable in the network-agnostic model and established the necessary and sufficient conditions. Our results demonstrate that solving a non-trivial validity property $\validity$ requires
\begin{enumerate*}[label=(\roman*)]
\item that $\validity$ satisfies the \emph{similarity condition}, and
\item that $n > 2 \cdot t_s + t_a$ assuming a public key infrastructure, or $n > 3 \cdot t_s$ otherwise. 
\end{enumerate*}
Further, we provided a universal protocol that solves a given validity property whenever these established conditions are met.
Our characterization follows the line of works of \cite{PODC23:CGGKV,PODC24:CGGKPV} focusing on the partially synchronous model and on the synchronous model. At the same time, it generalizes prior results on when network-agnostic $\ba$ can be achieved \cite{TCC:BKL19, constantinescu_ghinea_convex_consensus} from fixed validity properties to arbitrary validity properties.

While our work provides a complete answer for solvability, we leave a number of exciting problems open. 
Our results and approach do not hold if we allow the protocol to fail with some probability $\varepsilon > 0$. Focusing on this weaker setting would likely lead to more resilient protocols.
Future works could also extend our characterization to cover settings where both $V_{\inputt}$ and $V_{\outputt}$ are uncountable sets and consider proving message complexity lower bounds. Other promising directions would aim to improve the efficiency of our universal protocol, or to generalize our results further to network-dependent validity properties (that allow weaker guarantees if the network is asynchronous) \cite{OPODIS:CGHWW23}, or to weaker agreement definitions \cite{constantinescu_ghinea_convex_consensus}.



\newpage
\appendix
\section{Appendix}

\subsection{Preliminaries: Missing Proofs}\label{appendix:preliminaries}

\ValidityInclusion*
\begin{proof} Consider a deciding execution $\execution_I$ of $\Pi$ for $I$. Construct the execution $\execution_J$, which is identical to $\execution_I$ except that its input configuration is $J$ instead of $I$. Observe that $\execution_J$ is an execution of $\Pi$ for $J$. Indeed, parties in $\parties(I) \setminus \parties(J)$ (which are byzantine) can act as if they are honest and behave exactly as they do in $\execution_I$. The honest parties in $J$ cannot distinguish between $\execution_I$ and $\execution_J$, so the agreed-upon value in both executions must be the same. Since $\execution_J$ is a valid execution of $\Pi$ for $J$, this value must be in $\validity(J)$.
\end{proof}

\TechnicalEnd*

\begin{proof}
    Note that a countable intersection of events that happen almost surely happens almost surely. We assume the lemma's condition and want to prove that $\validity$ is trivial.
    Because $V_\inputt$ is at most countably infinite, the number of maximal input configurations $I_2 \in \inputconfigs$ is at most countably infinite. By taking an intersection of events, one for each maximal $I_2 \in \inputconfigs$, that happen almost surely by the lemma condition, we get that, almost surely, for all $I_2 \in \inputconfigs$ the canonical executions of $\Pi$ for $I_1$ and $I_2$ decide the same value. Hence, because it happens with probability $1$, it means we can find $\omega \in \Omega$ such that this event happens. That is, for all maximal $I_2 \in \inputconfigs$, the canonical executions of $\Pi$ for $I_1$ and $I_2$ with randomness $\omega$ decide the same value. This implies that, for all maximal $I \in \inputconfigs$, the canonical execution of $\Pi$ on $I$ with randomness $\omega$ decides the same value $v \in V_{\outputt}$. Therefore, by \cref{coro:party-max}, $\validity$ is trivial.
\end{proof}

\subsection{No Cryptographic Setup: Missing Proofs}

\subsubsection{Proof of Lemma \ref{lemma:3-parties}} \label{appendix:ring}
We describe the proof of Lemma \ref{lemma:3-parties}, restated below.
\RingLemma*

To prove our statement, we will be running $A$ in a larger ring containing multiple copies of each party, as depicted in Figure \ref{figure:shinier-ring-appendix}. The ring is constructed from two canonical executions with different input configurations. We will show that parties adjacent in this ring cannot distinguish between the ring and the original setting of three parties, as the third party may be byzantine and simulate the rest of the ring.
This will force parties adjacent in the ring to output the same value, which, in turn, guarantees that the two original executions lead to the same output. 

\paragraph{Restricting the probabilistic space.}
In the following proof, we will consider a countable amount of different executions. Since $\Pi$ satisfies probabilistic termination, it means that each of these executions are deciding almost surely. Because the countable union of almost surely events happens almost surely, this means that the event $E=\{\text{"all executions considered are deciding"}\}$ happens almost surely. So it is enough to prove that all canonical execution of $A$ in $E$ decide the same value to get the proof.



\begin{figure}[t]
\centering
\includegraphics[scale=0.15]{figures/syncpart2_cropped_new}
\caption{Defining the behavior of a byzantine party (here $P_1$).}\label{figure:shinier-ring-appendix}
\end{figure}

\paragraph{Constructing the ring.}
In order to formally describe the construction of the ring, we have to fix two executions with the same randomness. We denote the three parties by $\PS = \{\party_1, \party_2, \party_3\}$, and we consider two arbitrary maximal input configurations $I_1, I_2$. Let $\omega \in E$, and we consider a canonical execution $\varepsilon_1(\omega)$ with input configuration $I_1$, and a canonical execution $\varepsilon_2(\omega)$  with input configuration $I_2$. By definition of $E$, these two executions are deciding.

As $\varepsilon_1(\omega)$ is a deciding execution, there is a number of rounds $r_1(\omega) > 0$ such that, in $\varepsilon_1(\omega)$, all honest parties have decided the same value within $r_1(\omega)$ rounds.
Similarly, there is an $r_2(\omega)$ such that, in the canonical execution $\varepsilon_2(\omega)$, all honest parties have decided the same value within $r_2(\omega)$ rounds. Let $r := \max\{r_1(\omega), r_2(\omega)\}$, $r$ implicitly depends on $\omega$.




To construct the ring depicted in Figure \ref{figure:shinier-ring-appendix}, we make $4(r+1)$ copies of each party $P_i$. Out of the $4(r+1)$ copies of party $\party_i$, $2(r+1)$ will be the copies of $\party_i$ having its input value from $I_1$ and $2(r+1)$ will be the copies of $\party_i$ with input from $I_2$.
The copies of $\party_i$ are then denoted by $\party_{k,i,j}$, where $k \in \{1,2\}$, $i \in \{1,2,3\}$, and $j \in \{0, 1, \ldots, 2r\}$. The copies indexed by $k := 1$ are the ones on the top row of Figure \ref{figure:shinier-ring-appendix}, while the copies indexed by $k := 2$ are the ones on the bottom row.
We now connect these copies via bidirectional communication channels: 
\begin{itemize}
    \item For $k \in \{1,2\}$ and $j \in \{0, 1, \ldots, 2r\}$, we add a channel between $\party_{k,1,j}$ and $\party_{k,2,j}$, and one between  $\party_{k,2,j}$ and $\party_{k,3,j}$.
    \item Then, to complete the path on the row indicated by index $k := 1$, for every index $j \in \{0, 1, \ldots, 2r-1\}$, we add a channel between $\party_{1,3,j}$ and $\party_{1,1,j+1}$.
    \item Similarly, to complete the path on the row indicated by $k := 2$, for each $j \in \{0, 1, \ldots, 2r-1\}$, we add a channel between $\party_{2,1,j}$ and $\party_{2,3,j+1}$.
    \item We now connect the two rows and hence complete the ring: we add a channel between $\party_{1,1,0}$ and $\party_{2,3,0}$, and one between $\party_{1,3,2r}$ and $\party_{2,1,2r}$.
\end{itemize}


\paragraph{Outputs of adjacent copies.} 
We consider a synchronous execution $\varepsilon(\omega)$ on the ring, where each copy $\party_{k, i, j}$  runs $A$ as party $\party_i$, using the input value assigned to it in input configuration $I_k$. We assume that messages are received exactly $\Delta$ units of time after being sent. In the lemma below, we show that adjacent copies in the ring, denoted by $Q_1$ and $Q_2$, obtain the same output.

\begin{lemma} \label{lemma:adjacent-copies}
    Consider two parties $Q_1$ and $Q_2$ that are adjacent in the ring. Then, in execution $\varepsilon(\omega)$, $Q_1$ and $Q_2$ obtain outputs, and they output the same value.
\end{lemma}

\begin{proof}
    Without loss of generality, assume that $Q_1$ and $Q_2$ are copies of $\party_1$ and $\party_2$.
    We consider an execution $\varepsilon'(\omega)$ of $A$ with three parties. In execution $\varepsilon'(\omega)$, $\party_1$ and $\party_2$ have the same input values as $Q_1$ and $Q_2$. $\party_3$ is byzantine and simulates the additional parties in the ring, so they have the same behavior as in execution $\varepsilon(\omega)$.
    All messages are delivered in exactly $\Delta$ time, similarly to execution $\varepsilon(\omega)$.

    We remark that execution $\varepsilon'(\omega)$ and execution $\varepsilon(\omega)$ are identical. Hence, our execution over the ring is equivalent to running $A$ in a synchronous setting with three parties out of which one is corrupted. Therefore, $A$ maintains its properties, namely probabilistic termination and agreement, on the ring as well. It follows that $Q_1$ and $Q_2$ both obtain outputs, and they output the same value.
\end{proof}

\paragraph{Outputs on the entire ring.}
\cref{lemma:adjacent-copies} establishes that adjacent copies in the ring output the same value in execution $\varepsilon'(\omega)$.
Using induction, we prove that all parties output the same value $v$ in $\varepsilon(\omega)$. As a consequence, the copies $\party_{1,i,r}$ and $\party_{2,i,r}$ of each party $\party_i$ output $v$. This will imply that, in the two executions $\varepsilon_1(\omega)$ and $\varepsilon_2(\omega)$ we defined when constructing the ring, all honest parties output the same value.


\begin{lemma}\label{lemma:same-output-ring}
    In execution $\varepsilon(\omega)$, all parties output, and they output the same value.
\end{lemma}
\begin{proof}
We prove by induction that, in execution $\varepsilon(\omega)$, after $r' \leq r$ rounds, 
for $i \in \{1,2,3\}$, parties $\party_{1,i,j}$ with $j \in \{ r - (r-r'), \ldots, r + (r-r') \}$ 
are in the same state as party $\party_i$ in the canonical execution $\varepsilon_1(\omega)$.
The base case $r' := 0$ considers the very beginning of the executions $\varepsilon(\omega)$ and $\varepsilon_1(\omega)$, where the parties have not yet received any messages. Their state then is only defined by their input value and $\omega$. Each party $\party_{1,i,0}$ takes its input from $I_1$, which the case for $\party_i$ in execution $\varepsilon_1(\omega)$ as well. 
We thus obtain that parties $\party_{1,i,0}$ in the ring have the same input state as party $\party_i$.

For the induction step, assume that our claim holds for $r' < r$, and we prove that it also holds for $r' + 1$. Let $\PS_{r'}$ denote the set of parties for which we proved they were in the same state as in $\varepsilon_1(\omega)$ after $r'$ rounds, and let $\PS_{r'+1}$ be the set of parties for which we want to prove that the claim holds after $r'+1$ rounds. The set of direct neighbors of $\PS_{r'+1}$ (including themselves) in the ring is $\PS_{r'}$. Moreover, within one round, parties are only able to receive messages from their direct neighbors. Using the induction hypothesis, we get that parties in $\PS_{r'+1}$ receive exactly the same messages at the $(r'+1)$-th round in execution $\varepsilon_1(\omega)$ and $\varepsilon(\omega)$, therefore they stay in the same state after round $r' + 1$.
    
As a consequence, $\party_i$ and  $\party_{1,i,r}$ are in the same state after the $r$-th round in executions $\varepsilon_1(\omega)$ and $\varepsilon(\omega)$ respectively. However, we have assumed that all parties in $\PS$ output by round $r_1 \leq r$ in execution $\varepsilon_1(\omega)$. Therefore, in execution $\varepsilon(\omega)$, $\party_{1,i,r}$ outputs the same value $v_1$ as  $P_i$ in $\varepsilon_1(\omega)$. With a symmetric argument, one can show that parties $\party_{2,i,r}$ obtain the same output $v_2$ as $P_2$ in $\varepsilon_2(\omega)$. This enables us to conclude that $\varepsilon_1(\omega)$ and $\varepsilon_2(\omega)$ decide the same value  $v_1 = v_2$ using \cref{lemma:adjacent-copies}.
\end{proof}

\paragraph{Assumption on deciding execution.} Before concluding the proof of \cref{lemma:3-parties} using \cref{lemma:same-output-ring}, we need to discuss the assumptions over deciding executions and make sure that event $E$ happens almost surely. For every possible input configuration of size three, we consider a finite amount of fixed executions. Moreover, we take into account that there is only at most a countably infinite amount of input configurations (because we assumed $V_\inputt$ was countable). Therefore, we consider in total a countable union of a finite amount of executions, which is countable. Therefore, the event that all these executions are deciding $E$ happens almost surely, which concludes the proof of \cref{lemma:3-parties}.

\subsubsection{Proof of Theorem \ref{theo:t-sync}} \label{appendix:ring-extended}
We present the proof of Theorem \ref{theo:t-sync}, restated below.
\RingExtended*

\begin{proof}
 Assume $t_s \geq \lfloor n / 3 \rfloor$, and that there is a $t_s$-secure $\ba$ protocol $\Pi$ solving $\validity$ in the synchronous model. 
 
 We consider two arbitrary maximal input configurations $I_1$ and $I_2$.
 We show that all canonical executions of $I_1$ and $I_2$ almost surely decide on the same output value. Then, \cref{lemma:technical-end} ensures that $\validity$ is trivial. Similarly to the proof of \cref{thm:warm-up} and \cref{theo:agnostic-lower-bound}, we use $\Pi$ to build a protocol $A$ for three parties, denoted by $\party_1, \party_2, \party_3$, that matches the setting of \cref{lemma:3-parties}. For protocol $A$, we consider the input space $\{0, 1\}$ and the output space $V_{\outputt}$.




\begin{figure}[t]
\centering
\includegraphics[scale=0.12]{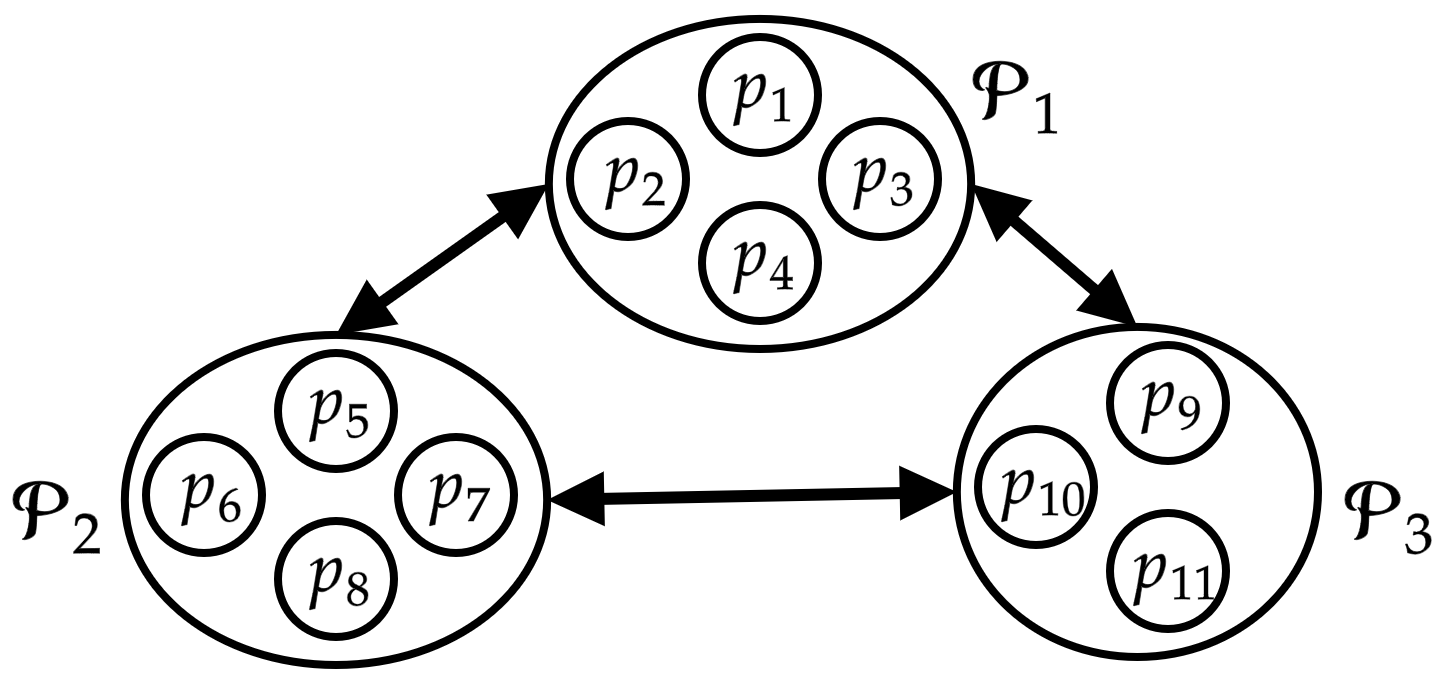}
\caption{Partitioning $\PS$ into $3$ sets $\PS_1, \PS_2, \PS_3$ with $n = 11$}  \label{figure:3-parties-partition}
\end{figure}

To do so, we partition $\PS$ into three sets $\PS_1, \PS_2, \PS_3$ of size at most $t_s$ each. As shown in Figure \ref{figure:3-parties-partition}, in protocol $A$, $\party_i$ simulates all the parties in set $\PS_i$. $\party_i$ ensures that all messages between the simulated parties in $\PS_i$ get delivered within $\Delta$ time. In addition, $\party_i$ forwards any message sent from a party it simulates to a party in set $\PS_j$ ($j \neq i$) to $\party_j$. When $\party_j$ receives this message, it immediately forwards it to the simulated receiver. 
If $\party_i$ has input $0$, the simulated parties in $\PS_i$ take their input from $I_1$. Otherwise, they take their input from $I_2$. When a party in $\PS_i$ outputs a value $v$, $\party_i$ outputs $v$.

Since $\Pi$ is a $t_s$-resilient $\ba$ protocol when $n \leq 3 \cdot t_s$, we obtain that $A$ achieves probabilistic termination and agreement even when one of the three parties is byzantine. Then, \cref{lemma:3-parties} ensures that, almost surely, all deciding canonical executions of $A$ lead to the same output value.

Moreover, we remark that by construction, the canonical execution of $A$ with input values $(0,0,0)$ matches the canonical execution of $I_1$ for $\Pi$. Similarly, the canonical execution of $A$ with input $(1,1,1)$ matches the canonical execution of $I_2$. As a consequence, canonical executions of $I_1$ and $I_2$ decide the same value almost surely. Applying \cref{lemma:technical-end}, we conclude that $\validity$ is trivial.
\end{proof}

\subsection{Similarity Condition: Missing Proof}\label{appendix:similarity}
\SimilarityLemma*
\begin{proof}
    Assume that $\Pi$ is a network-agnostic $\ba$ protocol solving $\validity$. Consider an input configuration $I$.
    
    In the following proof, we will consider a finite amount of different executions. Since $\Pi$ satisfies probabilistic termination, it means that each of these executions are deciding almost surely. 
    Because the finite union of almost surely events happens almost surely, all of these executions will decide almost surely. So, there exists some $\omega \in \Omega$ which we fix such that all the executions below are deciding when ran with randomness $\omega$.
    
    We first consider the canonical execution $\execution_1(\omega)$ of $\Pi$ on $I$.
    As $\Pi$ achieves network-agnostic $\ba$, all honest parties output the same value $v$ in execution $\execution_1(\omega)$. We want to prove that $v \in  \bigcap_{J \in \similar(I)} \validity(J)$, hence we show that $v \in \validity(J)$ for every $J \in \similar(I)$. Using the definition of $\similar(I)$, we split the analysis as follows: 
    \begin{enumerate}[nosep,label=(\roman*)]
        \item $J \in \neighbors(I)$ such that $J \subseteq I$. Consider an execution $\execution_2(\omega)$ where the input configuration is $J$ and the network is synchronous. Parties in $\parties(I) \setminus \parties(J)$ are byzantine, but follow the protocol correctly using the values assigned to them in $I$ as inputs. 
        Parties in $\PS \setminus \parties(I)$ crash at the start of the execution.
        Parties in $\parties(J)$ cannot distinguish between $\execution_2(\omega)$ and $\execution_1(\omega)$, so the output value $v$ must also satisfy $v \in \validity(J)$
        \item $J \in \neighbors(I)$ such that $|J| \geq n - t_a$. First, note that $I \cap J \neq \emptyset$: since 
         $|I| \geq n-t_s$ and $|J| \geq n-t_a$, 
        we obtain that $\abs{I \cap J} \geq n - t_s - t_a$. As $\validity$ is solvable, \cref{theo:agnostic-lower-bound} ensures that $n > 2 \cdot t_s + t_a$, and allows us to conclude that 
        $\abs{I \cap J} > 0$.
        
        Let $\party \in I \cap J$. We consider an execution $\execution_3(\omega)$ where the input configuration is $J$ and the network is asynchronous. Similarly to execution $\execution_2(\omega)$, parties in $\parties(I) \setminus \parties(J)$ are byzantine, but follow the protocol correctly using the values assigned to them in $I$ as inputs, and parties in $\PS \setminus (\parties(I) \cup \parties(J))$ crash at the very beginning of the execution. All messages are delivered in a synchronous way, except for the messages sent from parties in $\parties(J) \setminus \parties(I)$. These are delayed until after party $\party$ outputs: this is possible as
         $\party$ cannot distinguish between $\execution_1(\omega)$ and $\execution_3(\omega)$ and therefore it has to obtain an output without receiving these messages. In addition, the fact that $\party$ cannot distinguish between $\execution_1(\omega)$ and $\execution_3(\omega)$ ensures that the output $v$ agreed upon in $\execution_1(\omega)$ satisfies $v \in \validity(J)$.
    \end{enumerate}

    Therefore, the output $v$ in execution $\varepsilon_1(\omega)$ satisfies $v \in \bigcap_{J \in \similar(I)} \validity(J)$. So when running $\execution_1(\varepsilon)$, if this execution terminates (which happens almost surely), then this value can be used for $\sigma(I)$. 
    
    We now explain how to get a Turing computable function out of protocol $\Pi$ and this property. We first remark that $\Pi$ is a randomized protocol, which can be simulated by a probabilistic Turing machine. However, when time complexity is not taken into account, a probabilistic Turing machine is as expressive as a regular deterministic Turing machine (see Theorem 2 from \cite{Gill74}: for a given number of random bits used, we can try all of them in exponential time). Therefore, we can simulate a deciding execution of $\varepsilon_1$ using a deterministic Turing machine. Because there are no byzantine parties, we can also remove the public-key infrastructure assumption which was used as a black box (for example, one can replace the signing function with one that returns the message concatenated with the id of the party). This in turns gives us a Turing function to compute $\sigma(I)$.
\end{proof}


\end{document}